\documentclass[conference,a4paper]{IEEEtran}
\addtolength{\topmargin}{9mm}

\usepackage[utf8]{inputenc}
\usepackage[T1]{fontenc}

\usepackage[cmex10]{amsmath}
\interdisplaylinepenalty=1000
\usepackage{mleftright}
\mleftright
\usepackage{graphicx}
\usepackage{url}
\usepackage{booktabs}

\usepackage{amssymb}
\usepackage{physics}
\usepackage{amsthm}

\usepackage{cite}
\usepackage{amsfonts}
\usepackage[noend]{algorithmic}
\usepackage{textcomp}
\usepackage{xcolor}
\def\BibTeX{{\rm s\kern-.05em{\sc i\kern-.025em s}\kern-.08em
    T\kern-.1667em\lower.7ex\hbox{E}\kern-.125emX}}

\pagestyle{headings}

\usepackage{algorithm}
\usepackage[caption=false]{subfig}

\usepackage{charter, latexsym}
\usepackage{enumerate}

\usepackage{easyReview}

\newtheorem{lemma}{Lemma}

\newtheorem{theorem}{Theorem}

\captionsetup{justification=justified}

\usepackage{graphicx}
\usepackage[export]{adjustbox}

\setlength {\marginparwidth }{2cm}

\begin{document}

\title{Minimizing Moments of AoI for Both Active and Passive Users through Second-Order Analysis}

\author{
    \IEEEauthorblockN{Siqi Fan}
    \IEEEauthorblockA{
    \textit{Texas A\&M University}\\
    College Station, USA \\
    sf26372@tamu.edu}
    \and
    \IEEEauthorblockN{Yuxin Zhong\textsuperscript{\textsection}}
    \IEEEauthorblockA{
    \textit{Georgia Institute of Technology}\\
    Atlanta, USA \\
    yzhong332@gatech.edu}
    \and
    \IEEEauthorblockN{I-Hong Hou}
    \IEEEauthorblockA{
    \textit{Texas A\&M University}\\
    College Station, USA \\
    ihou@tamu.edu}
    \and
    \IEEEauthorblockN{Clement Kam}
    \IEEEauthorblockA{%
    \textit{U.S. Naval Research Laboratory}\\
    Washington, USA \\
    ckk@ieee.org}
}

\maketitle

\begingroup\renewcommand\thefootnote{\textsection}
\footnotetext{Yuxin’s work is done in her study in Texas A\&M university.}

\begin{abstract}


In this paper, we address the optimization problem of moments of Age of Information (AoI) for active and passive users in a random access network. In this network, active users broadcast sensing data while passive users only receive signals. Collisions occur when multiple active users transmit simultaneously, and passive users are unable to receive signals while any active user is transmitting. Each active user follows a Markov process for their transmissions. We aim to minimize the weighted sum of any moments of AoI for both active and passive users in this network. To achieve this, we employ a second-order analysis to analyze the system. Specifically, we characterize an active user's transmission Markov process by its mean and temporal process. We show that any moment of the AoI can be expressed a function of the mean and temporal variance, which, in turn, enables us to derive the optimal transmission Markov process. Our simulation results demonstrate that this proposed strategy outperforms other baseline policies that use different active user transmission models.

\end{abstract}

\section{Introduction}

Recently, there has been a significant increase in the use of time-sensitive applications such as Internet of Things, vehicular networks, sensor networks, and drone systems for surveillance and reconnaissance. While traditional research has focused on optimizing the reliability and throughput of communication, these efforts often fall short in meeting the demands for fresh data. To address this, a performance metric known as age-of-information (AoI) was introduced \cite{kaul2012real} and has received increasing attention in recent literature \cite{huang2015optimizing,chen2016age,kosta2017age,yates2017status,hsu2017age,kadota2018scheduling,yates2018status, sun2018age, sun2019closed,maatouk2020asymptotically,zou2021optimizing,jaiswal2021minimization,yates2020age,moltafet2020average,wang2022useful,inoue2017stationary,moltafet2022aoi,moltafet2022moment,banerjee2020fundamental,he2022decentralized,chen2020age,pan2022age,yavascan2021analysis,munari2021modern,saurav2021game,yang2021understanding,kriouile2021minimizing,chen2021timely,guo2022theory,yao2022age,maatouk2022analysis}.

Previous studies on AoI have commonly considered centralized control over transmission decisions \cite{yates2018status, sun2018age, kadota2018scheduling, sun2019closed, maatouk2020asymptotically, zou2021optimizing,jaiswal2021minimization} and queuing problems without transmission collisions \cite{moltafet2020average,moltafet2022aoi,wang2022useful,moltafet2022moment,inoue2017stationary}. However, centralized policies and collisions can lead to large overhead or extra communication, which can be a significant drawback in delay-sensitive applications such as the Internet of Things and transmission-sensitive applications like satellite communications.

To address this issue, we consider a random access network, where active users broadcast sensed data and multiple transmissions at the same time can cause collisions. Since there are no acknowledgements in broadcast, active users do not have any feedback information on collisions. This makes recent feedback-based algorithms for AoI optimization \cite{chen2020age,pan2022age,yavascan2021analysis,munari2021modern,he2022decentralized,saurav2021game} inapplicable.

Additionally, we consider that the network has silent observers, referred to as passive users, who share the same channel as active users. Passive users aim to observe out-of-network radio activities in the same channel but do not make any transmissions. Examples of passive users include sensors detecting malicious jammers, receivers of satellite communications, and radio telescopes tracking celestial objects. The coexistence of passive and active users can improve channel utilization, as passive users can obtain valuable observations when no active user is transmitting.

As AoI depends on previous successful transmissions, memoryless random access algorithms, such as slotted ALOHA, are not suitable for optimizing AoI. Therefore, we consider that each active user follows a Markov process to determine its transmission activities. Furthermore, instead of only considering the mean of AoI, we consider the moments of AoI, which can offer insights into the variance and the maximum value of AoI.

The goal of this paper is to minimize the weighted sum of any moment of AoI for both active users and passive users. There are several challenges that need to be addressed. First, since the transmission strategy of an active user is Markovian instead of i.i.d., the temporal correlation of its transmission activities needs to be explicitly taken into account. Second, active users and passive users have different behavior, which leads to different definitions of AoI between the two types of users. Third, most existing studies on moments of AoI, such as \cite{inoue2017stationary,moltafet2022aoi, moltafet2022moment}, focus only on simple queueing systems and are not applicable to random access networks. Finally, most existing studies on AoI in random access networks only focus on the first moment of AoI and can only obtain the optimal solution in the asymptotic sense.


To address above challenges and solve the optimization problem, we propose to investigate the system behavior through a second-order analysis. We analyze the mean and temporal variance of delivery process to characterize the system. Then, we formulate the expressions of moments of AoI based on these mean and temporal variance, and then further formulate them as functions of state change probabilities in Markov process. By revealing special properties in expressions of moments of AoI for both active and passive users, we find that the optimal action for an active user, under some minor conditions, is to become silent immediately after one attempt of transmission. The simulation results show that our formulation of moments of AoI is valid and our solution outperforms other baseline algorithms that have different active user activity models.

The rest of the paper is organized as follows. Section~\ref{sec:system} introduces our system model and the optimization problem. Section~\ref{sec:model} formulates expressions of moments of AoI based for both active and passive users based on a second-order model. Section~\ref{sec:opt} solves the optimal transmission strategies. Some simulation results are presented in Section~\ref{sec:sim}. Finally, Section~\ref{sec:conclusion} concludes the paper.

\section{System setting}
\label{sec:system}

We consider a wireless system with one channel and two types of users, active users and passive users. Each active user monitors a particular field of interest and uses the wireless channel to transmit its surveillance information to its neighboring receivers. Passive users do not make any wireless transmissions and instead monitor the radio activities in the wireless channel. For example, in a battlefield scenario, an active user can be a surveillance drone that monitors a certain area in the battlefield and broadcasts its video feed to all nearby units. A passive user can be a signal detector that aims to identify and locate enemy jammers or communication devices operating in the same channel.

Next, we discuss the interference relationship among users. We assume that active users are grouped into $C$ clusters and each cluster has $N$ ($N\geq 2$) active users. Active users in the same cluster interfere with each other but do not suffer from interference from other clusters. An active user can successfully deliver a packet if it is the only device in the cluster that is transmitting. If multiple active users in the same cluster transmit at the same time, then a collision occurs and none of the packets are received. On the other hand, since passive users need to detect enemy devices that are potentially far away, we assume that a passive user is interfered by all active users and it can only detect radio activities if no active user is transmitting.

The performance of each user is determined by its associated AoI. We assume that time is slotted and the duration of one time slot is the time needed to transmit one packet. Each active user generates a new surveillance packet in each slot. Hence, the AoI of an active user $n$ at time $t$ is defined by following recursion:
\begin{align*}
    AoI_{n,t} = \begin{cases}
                    1, & \text{if users $n$ successfully delivers} \\
                    & \text{a packet at time $t$,} \\
                    AoI_{n,t-1} + 1, & \text{otherwise}.
                \end{cases}
\end{align*}
Since passive users can only detect radio activities when no active users are transmitting, the AoI of passive user $j$ at time $t$ is defined by
\begin{align*}
    AoI_{j,t} = \begin{cases}
                    1, \; & \text{if no active user transmits,} \\
                    AoI_{j,t-1} + 1, \; & \text{otherwise}.
                \end{cases}
\end{align*}

We now discuss the transmission strategy of each active user. We assume that the strategy of an active user is governed by a Markov process with two states, a TX state and an Idle state. An active user will transmit a packet at time $t$ if and only if it is in the TX state. If an active user is in the TX state at time $t$, then its state will change to Idle at time $t+1$ with probability $s$. If an active user is in the Idle state at time $t$, then its state will change to TX at time $t+1$ with probability $r$. Fig.~\ref{fig:state} presents this Markov process. We note that this strategy generalizes the ALOHA network. In particular, if $r+s=1$, then an active user will transmit in each time slot with probability $r$ independently.

\begin{figure}[thb]
    \centering
    \includegraphics[scale=0.23]{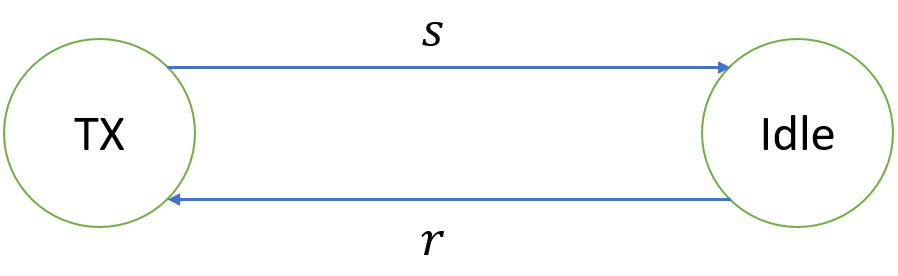}
    \caption{Markov process model.}
    \label{fig:state}
\end{figure}

We aim to minimize the moments of AoI for both active users and passive users. Specifically, let $AoI_a$ be the random variable of an active user's AoI and let $AoI_p$ be the random variable of a passive user's AoI in steady state. For a fixed integer $z$ and a fixed real number $w\in[0,1]$, we aim to find $r$ and $s$ that solve the following optimization problem:
\begin{align}
    \min \; & F(r,s) := w E[AoI_a^z] + (1-w) E[AoI_p^z] \label{opt_problem}\\
    \text{s.t.} \; & r,s\in [0,1], \label{restriction}
\end{align}
where $E[\cdot]$ is the expectation function.

\section{The Second Order Model for AoI}
\label{sec:model}

A key challenge in solving the optimization problem (\ref{opt_problem})-(\ref{restriction}) is that it is hard to express $E[AoI_a^z]$ and $E[AoI_p^z]$ as functions of $r$ and $s$. We propose to adopt the framework of second-order network optimization in Guo \textit{et al.} \cite{guo2022theory} to address this challenge. The framework in \cite{guo2022theory} is only applicable to centrally scheduled systems and the mean of AoI. In this section, we will extend this framework to model random access networks and moments of AoI.

We first define the second-order model that describes the performance of both active and passive users. For an active user $n$, Let $S_n(t)$ be the indicator function that $n$ successfully delivers a packet at time $t$, that is, $n$ is the only active user in its cluster that transmits at time $t$. We call $[S_n(1), S_n(2), \dots]$ the \emph{delivery process} of $n$. Due to symmetry, the delivery processes of all active users follow the same distribution. We then define the mean and temporal variance of the delivery process of an active user by
\begin{align*}
    m_a &:= \lim_{T\xrightarrow{} \infty } \frac{\sum_{t=1}^{T}S_n(t)}{T}, \\
    v_a^2 &:= E[(\lim_{T\xrightarrow{} \infty } \frac{\sum_{t=1}^{T}S_n(t)-T m_a}{\sqrt{T}})^2],
\end{align*}
respectively, where the expectation is taken over all sample paths.

For a passive user $j$, we let $S_j(t)$ be the indicator function that no active user transmits, and hence the passive user can monitor radio activities without interference, at time $t$. We call $[S_j(1), S_j(2), \dots]$ the \emph{passive detection process}. The mean and temporal variance of the passive detection process are defined as $m_p := \lim_{T\xrightarrow{} \infty } \frac{\sum_{t=1}^{T}S_j(t)}{T}$ and $v_p^2 := E[(\lim_{T\xrightarrow{} \infty } \frac{\sum_{t=1}^{T}S_j(t)-T m_p}{\sqrt{T}})^2]$, respectively.

The second-order model uses the means and temporal variances to capture and approximate all aspects of the system. To show its utility, we first derive the closed form expressions of $m_a$, $m_p$, $v_a^2$, and $v_p^2$. Let $\lambda := \frac{r}{r+s}$ and $\theta = 1-r-s$, we have the following theorem: 

\begin{theorem}
    \begin{align*}
        m_a &= \lambda (1-\lambda)^{N-1}, \quad m_p = (1-\lambda)^{CN}, \\
        v_a^2 =&  2\sum_{k=1}^{\infty}((\lambda+(1-\lambda)\theta^{k}) (1-\lambda+\lambda \theta^{k})^{N-1} -m_a) m_a\nonumber\\
        & + m_a - m_a^2, \\
        v_p^2 = & 2\sum_{k=1}^{\infty}((1-\lambda +\lambda \theta^{k})^{CN} -m_p) m_p +  m_p - m_p^2.
    \end{align*}
    \label{thm:mean_vairance}
\end{theorem}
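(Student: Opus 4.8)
The plan is to handle the two defining quantities separately: the means $m_a,m_p$ follow immediately from the stationary distribution of the chain together with independence across users, while the temporal variances $v_a^2,v_p^2$ reduce to summing the autocovariances of the corresponding indicator processes.

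First I would compute the stationary distribution of the two-state chain. Solving the balance equation $\pi_{\mathrm{TX}}\,s=\pi_{\mathrm{Idle}}\,r$ with $\pi_{\mathrm{TX}}+\pi_{\mathrm{Idle}}=1$ gives $\pi_{\mathrm{TX}}=r/(r+s)=\lambda$ and $\pi_{\mathrm{Idle}}=1-\lambda$. Since the $N$ users in a cluster (and all $CN$ users in the network) run independent copies of this chain, the event that user $n$ delivers successfully is the intersection of the independent events ``$n$ is in TX'' and ``each of the other $N-1$ users is in Idle,'' so $m_a=\lambda(1-\lambda)^{N-1}$; similarly the event that a passive user is uninterfered is ``all $CN$ users are in Idle,'' giving $m_p=(1-\lambda)^{CN}$.

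For the variances I would first rewrite the definition in covariance form. Because each delivery/detection process is a stationary, ergodic function of a finite-state Markov chain, the quantity inside the expectation converges and its second moment equals the asymptotic variance, which for a stationary process is $v^2=\mathrm{Var}(S(t))+2\sum_{k=1}^{\infty}\mathrm{Cov}(S(t),S(t+k))$. I would make this precise through the finite-horizon identity $\tfrac1T\mathrm{Var}(\sum_{t=1}^{T}S(t))=\mathrm{Var}(S)+2\sum_{k=1}^{T-1}(1-\tfrac kT)\mathrm{Cov}(S(t),S(t+k))$ and pass to the limit by dominated convergence, justified by the geometric decay $|\theta|^{k}$ of the covariances established below; here $\mathrm{Var}(S(t))=m-m^{2}$ since $S(t)$ is an indicator, which accounts for the trailing $m_a-m_a^2$ and $m_p-m_p^2$ terms.

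The key computational step is the lag-$k$ covariance, which requires the $k$-step transition probabilities of the chain. Diagonalizing the transition matrix (eigenvalues $1$ and $\theta=1-r-s$) yields $P^{(k)}_{\mathrm{TX}\to\mathrm{TX}}=\lambda+(1-\lambda)\theta^{k}$ and $P^{(k)}_{\mathrm{Idle}\to\mathrm{Idle}}=(1-\lambda)+\lambda\theta^{k}$, whence the stationary joint probabilities are $\lambda(\lambda+(1-\lambda)\theta^{k})$ for being in TX at both times and $(1-\lambda)(1-\lambda+\lambda\theta^{k})$ for being in Idle at both times. Factoring the joint delivery probability over independent users gives $E[S_n(t)S_n(t+k)]=(\lambda+(1-\lambda)\theta^{k})(1-\lambda+\lambda\theta^{k})^{N-1}m_a$, and analogously $E[S_j(t)S_j(t+k)]=(1-\lambda+\lambda\theta^{k})^{CN}m_p$; subtracting $m_a^2$ and $m_p^2$ produces exactly the summands in the theorem, and substituting into the covariance form of $v^2$ finishes the argument. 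I expect the main obstacle to be the rigorous justification of the limit-inside-expectation in the definition of $v^2$, namely establishing strong enough convergence (uniform integrability, or $L^2$ convergence) of the normalized partial sums to identify the claimed second moment with the summable autocovariance series; the remaining algebra is routine once the $k$-step transition probabilities and the independence factorization are in place.
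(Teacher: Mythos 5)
Your proposal is correct and follows essentially the same route as the paper's proof: the means come from the stationary distribution $\pi_{\mathrm{TX}}=\lambda$ together with independence across users, and the variances come from the decomposition $v^2=\mathrm{Var}(S(1))+2\sum_{k\ge 1}\mathrm{Cov}(S(1),S(k+1))$ with the lag-$k$ covariances evaluated through the $k$-step transition probabilities $\lambda+(1-\lambda)\theta^k$ and $1-\lambda+\lambda\theta^k$ (which the paper obtains by solving the recursion $G(k)=r+\theta G(k-1)$ rather than by diagonalization, a cosmetic difference). Your added care in justifying the autocovariance identity via the finite-horizon variance formula and geometric decay is a welcome strengthening of the paper's bare appeal to the Markov-chain central limit theorem, but it does not change the argument.
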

\begin{proof}
    See Appendix \ref{app:mean}.
\end{proof}

Next, we show that the moments of AoI of both active users and passive users can be approximated as functions of $(m_a, v_a^2)$ and $(m_p, v_p^2)$, respectively.

Due to space limitation, we focus on discussing the derivation of moments of AoI of active users.

Let $t_i$ be the $i$-th time that an active user $n$ successfully deliveries its packet, and define $l_n(i):=t_i-t_{i-1}$. For active user $n$, its AoI from $t_{i-1}$ to $t_i-1$ is from $1$ to $l_n(i)$. So, the summation of $z$-th moment of AoI from $t_{i-1}$ to $t_i-1$ is $\sum_{k=1}^{l_n(i)} k^z$. Let $L_n$ be the total number of deliveries for active user $n$. Then, summing over all $l_n(i)$, we have the sum of all $z$-th moment of AoI for active user $n$, $\sum_{i=1}^{L_n} \sum_{k=1}^{l_n(i)} k^z$. Next, divided by the whole time $\sum_{i=1}^{L_n} l_n(i)$ and choosing the expectation, we have $E[AoI_a^z] = E\Big[\lim_{L_n\xrightarrow{} \infty}\frac{\sum_{i=1}^{L_n}\sum_{k=1}^{l_n(i)}k^z}{\sum_{i=1}^{L_n}l_n(i)}\Big]$. By using the Faulhaber’s formula, we have 

\begin{align}
    E[AoI_a^z] & =  E\Big[\lim_{L_n\xrightarrow{} \infty}\frac{\sum_{i=1}^{L_n}1}{\sum_{i=1}^{L_n}l_n(i)}\Big(\frac{1}{\sum_{i=1}^{L_n}1}\sum_{i=1}^{L_n} \big( \frac{l_n(i)^{z+1}}{z+1} \nonumber  \\
    & \qquad + \frac{l_n(i)^z}{2} + \sum_{k=2}^{z} \frac{B_k}{k!}\frac{z!l_n(i)^{z-k+1}}{(z-k+1)!} \big) \Big)\Big] \nonumber \\
    & = \frac{1}{E[l_n(i)]}\big(\frac{E[l_n(i)^{z+1}]}{z+1} + \frac{E[l_n(i)^z]}{2} \nonumber \\
    & \qquad + \sum_{k=2}^z\frac{B_k}{k!}\frac{z!E[l_n(i)^{z-k+1}]}{(z-k+1)!} \big),
\end{align}
where $B_k$ is the Bernoulli number. 

To approximate $E[l_n(i)^{k}]$, we construct an alternative sequence $S'_n(t)$ that have the same mean and temporal variance as $S_n(t)$. Let $D_n(t)$ be a Brownian motion random process with mean $m_a$ and variance $v_a^2$. The sequence $\{S'_n(1), S'_n(2), \dots\}$ is defined as
\begin{align}
    S'_n(t) =  \begin{cases}
                    1, \; & \text{if $D_n(t)-D_n(t^-) \geq 1$,} \\
                    0, \; & \text{otherwise,}
                \end{cases}
    \label{eq:def_s'}
\end{align}
where $t^-=\max\{\tau|\tau<t, S'_n(\tau)=1\}$. Intuitively, we assume that a packet delivery occurs in the alternative sequence every time $D_n(t)$ increases by one since the last packet delivery in this sequence. Hence, $\sum_{\tau=1}^t S_n'(\tau)$ is close to $D_n(t)$, and the process $[S'_n(1), S'_n(2), ...]$ has mean $m_a$ and temporal variance $v_a^2$.

Let $l_n'(i):=t'_{n,i} - t'_{n,i-1}$, where $t'_{n,i}$ is the $i$-th time that $S'_n(t)=1$. Since $S'_n(t)$ has the same mean and temporal variance as $S_n(t)$, we propose to approximate $E[l_n(i)^k]$ by $E[l'_n(i)^k]$. Moreover, $l'_n(i)$ can be regarded as the time needed for $D_n(t)$ to increase by $1$, which is equivalent to the first-hitting time for the Brownian motion random process with level $1$. Thus, $l_n'(i)$ follows the inverse Gaussian distribution $IG(\frac{1}{m_a},\frac{1}{v_a^2})$ (\cite{schrodinger1915theorie,folks1978inverse}). By using the moment generating function of inverse Gaussian distribution in \cite{krishnamoorthy2006handbook}, we have $E[l'_n(i)^k] = \frac{1}{m_a^k}\sum_{\zeta=0}^{k-1}\frac{(k-1+\zeta)!}{\zeta!(k-1-\zeta)!}(\frac{2m_a}{v_a^2})^{-\zeta}$.

For a passive user $j$, let $\Bar{t}_i$ be the $i$-th time that $S_j(t)=1$ and $l_j(t):=\Bar{t}_i-\Bar{t}_{i-1}$. By combining the derivations above, we obtain the following:
\begin{theorem}
    Let $B_k$ be the Bernoulli number, then
    \begin{align}
        E[AoI_a^z] = & \frac{1}{E[l_n(i)]}(\frac{E[l_n(i)^{z+1}]}{z+1} + \frac{E[l_n(i)^z]}{2} \nonumber\\
        & + \sum_{k=2}^{z}\frac{B_k}{k!}\frac{z! E[l_n(i)^{z-k+1}]}{(z-k+1)!}),
        \label{eq:active_aoi}
    \end{align}
    and
    \begin{align}
        E[AoI_p^z] =& \frac{1}{E[l_j(i)]}(\frac{E[l_j(i)^{z+1}]}{z+1} + \frac{E[l_j(i)^z]}{2} \nonumber\\
        & + \sum_{k=2}^{z}\frac{B_k}{k!}\frac{z! E[l_j(i)^{z-k+1}]}{(z-k+1)!}),
    \end{align}
    where $E[l_n(i)^{k}] \approx \frac{1}{m_a^k}\sum_{\zeta=0}^{k-1}\frac{(k-1+\zeta)!}{\zeta!(k-1-\zeta)!}(\frac{2m_a}{v_a^2})^{-\zeta}$ and $E[l_j(i)^{k}] \approx \frac{1}{m_p^k}\sum_{\zeta=0}^{k-1}\frac{(k-1+\zeta)!}{\zeta!(k-1-\zeta)!}(\frac{2m_p}{v_p^2})^{-\zeta}$.
    \label{theorem:aoi_z}
    $\hfill\Box$
\end{theorem}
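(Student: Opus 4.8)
The plan is to treat both identities as instances of the same renewal-reward computation over inter-delivery intervals, so I would carry out the active-user case in full and note that the passive-user case is identical with $(m_a, v_a^2, l_n)$ replaced by $(m_p, v_p^2, l_j)$. First I would fix the sawtooth shape of the AoI sample path: over the interval $[t_{i-1}, t_i)$ the active user's AoI takes the values $1, 2, \dots, l_n(i)$ exactly once each, so this interval contributes $\sum_{k=1}^{l_n(i)} k^z$ to the time-accumulated $z$-th power. Faulhaber's formula rewrites this power sum as $\frac{l_n(i)^{z+1}}{z+1} + \frac{l_n(i)^z}{2} + \sum_{k=2}^z \frac{B_k}{k!}\frac{z!\,l_n(i)^{z-k+1}}{(z-k+1)!}$, a degree-$(z+1)$ polynomial in $l_n(i)$ with the stated Bernoulli-number coefficients. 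I would then invoke a renewal-reward/ergodic argument to pass from the sample-path time average to a ratio of expectations: writing the time average as $E\big[\lim_{L_n\to\infty}(\sum_{i=1}^{L_n}\sum_{k=1}^{l_n(i)}k^z)/(\sum_{i=1}^{L_n}l_n(i))\big]$ and using that $\{l_n(i)\}$ is stationary and ergodic, numerator and denominator each converge to their per-interval means, giving $E[AoI_a^z] = E[\sum_{k=1}^{l_n(i)}k^z]/E[l_n(i)]$. Substituting the Faulhaber polynomial and applying linearity of expectation yields (\ref{eq:active_aoi}).

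It then remains to supply the moments $E[l_n(i)^k]$. Here I would use the second-order surrogate $S'_n(t)$ from (\ref{eq:def_s'}): since it is constructed so that $\sum_{\tau\le t}S'_n(\tau)$ tracks a Brownian motion $D_n$ with mean $m_a$ and variance $v_a^2$, the surrogate inter-delivery gap $l'_n(i)$ is exactly the first time $D_n$ rises by one, i.e. the level-$1$ first-passage time of the Brownian motion, which is inverse Gaussian $IG(1/m_a, 1/v_a^2)$. Differentiating its moment generating function (as tabulated in \cite{krishnamoorthy2006handbook}) produces the closed form $E[l'_n(i)^k] = m_a^{-k}\sum_{\zeta=0}^{k-1}\frac{(k-1+\zeta)!}{\zeta!(k-1-\zeta)!}(2m_a/v_a^2)^{-\zeta}$, and I would then set $E[l_n(i)^k]\approx E[l'_n(i)^k]$ on the grounds that $S_n$ and $S'_n$ share the same mean and temporal variance. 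Combined with the ratio formula above, this gives the approximate expression in the statement; the passive-user identity follows by the same chain of steps applied to the detection process.

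The main obstacle is this last substitution: $E[l_n(i)^k]\approx E[l'_n(i)^k]$ is a modeling approximation rather than an exact identity, since matching only the first two temporal moments of the delivery process does not determine the full distribution of the true inter-delivery times, which is precisely why the theorem is stated as an approximation rather than an equality. A secondary technical point is justifying the interchange of the limit and the expectation in the renewal-reward step, which requires a uniform-integrability or ergodicity hypothesis on $\{l_n(i)\}$ that I would make explicit; I expect this to be routine relative to the approximation issue, whose error is genuinely second-order and not eliminable within this framework.
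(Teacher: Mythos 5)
Your proposal follows essentially the same route as the paper: the sawtooth decomposition of the AoI sample path, Faulhaber's formula, the renewal-reward passage to a ratio of moments of the inter-delivery time, and the Brownian-motion surrogate whose level-one first-passage time is inverse Gaussian $IG(1/m_a,1/v_a^2)$, with moments read off from the moment generating function. Your remarks on the two genuine soft spots --- the substitution $E[l_n(i)^k]\approx E[l'_n(i)^k]$ being a second-order modeling approximation, and the unjustified interchange of limit and expectation --- correctly identify why the paper states the result as an approximation.
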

For example, we have $E[AoI_a] = \frac{1}{2}(\frac{v_a^2}{m_a^2}+\frac{1}{m_a})+\frac{1}{2}$ and $E[AoI_a^2] = \frac{v_a^4}{m_a^4}+\frac{v_a^2}{m_a^3}+\frac{3v_a^2+2}{6m_a^2}+\frac{1}{2m_a}+\frac{1}{6}$.

With the help of Theorem \ref{theorem:aoi_z}, $F(r,s)$ can now be approximated as a function of $(m_a, v_a^2, m_p, v_p^2)$, which, as stated in Theorem~\ref{thm:mean_vairance}, can be expressed as functions of $r$ and $s$.


\section{AoI Optimization in Second Order Model}
\label{sec:opt}

In the previous section, we approximate $E[AoI_a^z]$ and $E[AoI_p^z]$ as functions of $r$ and $s$. In this section, we further study the optimal behavior of active users and solve the optimal $r$ and $s$. We find the interesting property that, when $N>C+4$, the optimal solution requires an active user to the Idle state every time it makes a transmission, i.e., $s=1$. This result is formally stated below.
\begin{theorem}
    When $N>C+4$, there exists a $\lambda\in[0,\frac{1}{N}]$ such that choosing $r=\frac{\lambda}{1-\lambda}$ and $s=1$ minimizes $F(r,s)$.
    \label{theorem:opt_sol}
\end{theorem}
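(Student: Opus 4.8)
The plan is to work in the $(\lambda,\theta)$ parametrization, where $r=\lambda(1-\theta)$ and $s=(1-\lambda)(1-\theta)$, and to exploit the clean split that Theorem~\ref{thm:mean_vairance} provides: the means $m_a,m_p$ depend on $\lambda$ only, while $\theta$ enters solely through the temporal variances $v_a^2,v_p^2$. For $\lambda\le\tfrac12$ the feasibility constraints $r,s\in[0,1]$ reduce to $\theta\in[-\tfrac{\lambda}{1-\lambda},\,1]$, and the left endpoint $\theta=-\tfrac{\lambda}{1-\lambda}$ is exactly the locus $s=1$, $r=\tfrac{\lambda}{1-\lambda}$. Thus the theorem splits into two claims: (i) for each fixed $\lambda$ the objective is minimized at the smallest feasible $\theta$, i.e.\ at $s=1$; and (ii) along the curve $s=1$ the minimizing $\lambda$ lies in $[0,1/N]$.

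For (i) I would first show that $F$ is nondecreasing in each of $v_a^2$ and $v_p^2$ when the means are held fixed. Writing $E[AoI_a^z]=m_a\,E[\phi_z(l_n)]$ with $\phi_z(l)=\sum_{k=1}^{l}k^z$ and $l_n\sim IG(1/m_a,1/v_a^2)$, the map $\phi_z$ is increasing and convex, while inverse Gaussian laws with a common mean are ordered in the convex order by their variance $v_a^2/m_a^3$; hence $E[\phi_z(l_n)]$, and its passive analogue, increases with the temporal variance. (The Bernoulli-number terms in \eqref{eq:active_aoi} make term-by-term monotonicity unclear, which is why I would route the argument through convex order rather than the explicit coefficients.) It then remains to show $v_a^2,v_p^2$ are nondecreasing in $\theta$. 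Expanding $h(x)=(\lambda+(1-\lambda)x)(1-\lambda+\lambda x)^{N-1}=\sum_{d\ge0}a_d x^d$ with $a_d\ge0$, the geometric series in Theorem~\ref{thm:mean_vairance} collapse to $v_a^2=2m_a\sum_{d\ge1}a_d\tfrac{\theta^d}{1-\theta^d}+m_a(1-m_a)$, so $\frac{\partial v_a^2}{\partial\theta}=2m_a\sum_{d\ge1}\frac{a_d\,d\,\theta^{d-1}}{(1-\theta^d)^2}$, and likewise $v_p^2$ comes from the degree-$CN$ polynomial $\tilde h(x)=(1-\lambda+\lambda x)^{CN}$.

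The crux is the sign of this derivative for $\theta<0$, where the even-degree terms are negative. On the feasible range $|\theta|\le\tfrac{\lambda}{1-\lambda}\le\tfrac{1}{N-1}$, so I would bound the negative $d=2$ contribution against the positive $d=1$ term. For the passive variance $\tilde a_2/\tilde a_1=\tfrac{(CN-1)\lambda}{2(1-\lambda)}$ is of order $C$, so the $d=2$ term carries an effective factor $2|\theta|\,\tilde a_2/\tilde a_1\approx C/(N-1)$ and is dominated by the $d=1$ term once $N>C+1$; carrying the estimate through the remaining $d\ge3$ terms is what produces the stated margin $N>C+4$. This passive-user variance estimate is the main obstacle, and it is the step in which the hypothesis $N>C+4$ is consumed. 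Establishing it, together with the simpler active-user bound, yields claim (i).

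For claim (ii) I would specialize to $s=1$ (so $\theta=-\tfrac{\lambda}{1-\lambda}$) and show both $E[AoI_a^z]$ and $E[AoI_p^z]$ are nondecreasing in $\lambda$ on $[1/N,1]$. Since $m_a=\lambda(1-\lambda)^{N-1}$ peaks at $\lambda=1/N$ and decreases thereafter, while $m_p=(1-\lambda)^{CN}$ is decreasing throughout, for $\lambda>1/N$ both means move in the unfavorable direction; combined with the monotonicity of the AoI moments in the means (dominant $m^{-z}$ scaling) and a check that the accompanying variance terms do not reverse this, $F$ is increasing beyond $1/N$, so no minimizer lies in $(1/N,1]$. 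Finally, continuity of $F$ on the compact feasible set guarantees a minimizer exists, and by (i)--(ii) it may be taken on the curve $\{s=1,\ \lambda\in[0,1/N]\}$, which is the assertion of the theorem.
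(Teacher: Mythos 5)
Your overall architecture is the same as the paper's: reduce the theorem to (a) monotonicity of the AoI moments in the mean and normalized temporal variance, (b) monotonicity of $v_a^2,v_p^2$ in $\theta$ so that the optimum sits at the boundary $s=1$, and (c) confinement of the optimal $\lambda$ to $[0,\frac{1}{N}]$; these are exactly Lemmas~\ref{lemma:incr}, \ref{lemma:opt_theta} and \ref{lemma:opt_l}/\ref{lemma_alpha}. Two of your tactical choices differ and are worth keeping. First, routing the monotonicity in the variance through the convex order on inverse Gaussian laws rather than through term-by-term positivity is arguably more robust than the paper's proof of Lemma~\ref{lemma:incr}, whose term-by-term argument ignores that the Faulhaber coefficients $B_k/k!$ are negative for $k=4,8,\dots$ (though you would still need to verify that $IG(1/m,1/v^2)$ with fixed mean is convex-ordered in $v^2$). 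Second, collapsing the lag sums into $\sum_{d\ge1}a_d\,\theta^d/(1-\theta^d)$ by expanding in the polynomial degree $d$ is a clean alternative to the paper's pairing of consecutive lag terms $k=2i-1,2i$.

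That said, there are two genuine gaps. The decisive quantitative step --- showing $\partial v_p^2/\partial\theta\ge0$ for $\theta<0$ precisely under the hypothesis $N>C+4$ --- is asserted rather than proved: your $d=2$ versus $d=1$ comparison yields roughly $N>C+1$, and ``carrying the estimate through the remaining $d\ge3$ terms'' is exactly where the paper expends Appendices~\ref{app:active} and \ref{app:lemma_a}, reducing the question to the cubics $h_N,\bar h_{CN}$ and locating their smallest positive roots $\alpha,\beta$ relative to $\frac{1}{N}$; without some version of that computation you have not shown that $N>C+4$ suffices. Separately, your claims (i) and (ii) do not jointly cover the feasible set: your argument for (i) uses $|\theta|\le\frac{\lambda}{1-\lambda}\le\frac{1}{N-1}$ and hence only applies for $\lambda\le\frac{1}{N}$, while your (ii) rules out $\lambda>\frac{1}{N}$ only along the curve $s=1$. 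A point with $\lambda>\frac{1}{N}$ and $\theta$ in the interior of its feasible range is excluded by neither claim. The paper avoids this by proving the $\lambda$-reduction (Lemma~\ref{lemma:opt_l}) for arbitrary $\theta$ first, and only then optimizing $\theta$ on $[0,\frac{1}{N}]\subseteq[0,\min\{\alpha,\beta\}]$; you would need either to extend (i) to all $\lambda$ or to prove (ii) for all feasible $\theta$.
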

In addition to showing the interesting behavior that, under the optimal solution, an active user will never transmit in two consecutive slots, this theorem also simplifies the search for the optimal solution. In particular, it shows that one only needs to find the optimal $\lambda$, which can be done through a simple line search, in order to obtain the optimal solution.



Recall that $\lambda = \frac{r}{r+s}$ and $\theta=1-r-s$, or, equivalently, $r=\lambda(1-\theta)$ and $s=(1-\lambda)(1-\theta)$. Before proving Theorem 3, we first establish some properties about the optimal $\lambda$ and $\theta$. We start by the following lemma: 
\begin{lemma}
    For any positive integer $z$, $E[AoI_a^z]$ is an increasing function of $\frac{1}{m_a}$ and $\frac{v_a^2}{m_a^2}$, and $E[AoI_p^z]$ is an increasing function of $\frac{1}{m_p}$ and $\frac{v_p^2}{m_p^2}$.
    \label{lemma:incr}
\end{lemma}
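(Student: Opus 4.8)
The plan is to argue entirely from the closed forms in Theorem~\ref{theorem:aoi_z}. Set $x:=1/m_a$ and $y:=v_a^2/m_a^2$ (the passive case is identical with $m_p,v_p^2$), and rewrite the moment formula in these variables. Substituting $\left(\tfrac{2m_a}{v_a^2}\right)^{-\zeta}=\left(\tfrac{y}{2x}\right)^{\zeta}$ and $1/m_a^k=x^k$ gives
\[
    E[l_n(i)^k]=\sum_{\zeta=0}^{k-1}\frac{1}{2^{\zeta}}\frac{(k-1+\zeta)!}{\zeta!\,(k-1-\zeta)!}\,x^{k-\zeta}y^{\zeta},
\]
a polynomial in $x,y$ with strictly positive coefficients in which every monomial has $x$-degree at least one. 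In particular $E[l_n(i)]=x$, so dividing the Faulhaber bracket in (\ref{eq:active_aoi}) by $E[l_n(i)]$ cancels exactly one factor of $x$ and leaves $E[AoI_a^z]$ as an honest polynomial in $x$ and $y$ (no negative powers). The lemma therefore reduces to proving $\partial_x E[AoI_a^z]\ge 0$ and $\partial_y E[AoI_a^z]\ge 0$.

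Next I would read off the coefficients. Writing $\sum_{k\le t}k^z=\sum_{p} a_p t^{p}$ for the Faulhaber polynomial, the coefficient of $x^{i}y^{\zeta}$ in $E[AoI_a^z]$ is $a_{p}\,b_{p,\zeta}$ with $p=i+\zeta+1$, where $b_{p,\zeta}>0$ is the positive constant above and $a_p$ is a rescaled Bernoulli number. Hence the sign of each coefficient is the sign of $B_{z+1-p}$, which is nonnegative except at the sparse low orders where $z+1-p$ is a positive multiple of four. The dominant block $\tfrac{E[l_n(i)^{z+1}]}{(z+1)E[l_n(i)]}=\tfrac{1}{z+1}\sum_{\zeta}b_{z+1,\zeta}x^{z-\zeta}y^{\zeta}$ already has only nonnegative coefficients and is thus manifestly increasing in both $x$ and $y$; all that can spoil monotonicity are the sign-oscillating Bernoulli corrections, which are of strictly lower $x$-degree.

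The main obstacle is exactly to show these lower-order corrections cannot reverse the two partial derivatives. I would exploit that on the feasible set the delivery rate is small: since $m_a=\lambda(1-\lambda)^{N-1}$ with $N\ge2$, one has $m_a\le \tfrac14$ (and $m_a=O(1/N)$ in general), so $x=1/m_a$ is bounded well away from zero. Each negative monomial then sits one $x$-degree below a positive monomial with the same power of $y$, and a term-by-term pairing shows the positive term dominates once $x$ exceeds an explicit $z$-dependent constant, giving $\partial_x,\partial_y\ge0$ on the physical range. For the $y$-direction an alternative, cleaner argument avoids the sign bookkeeping altogether: with $m_a$ (hence $E[l_n(i)]=x$) fixed, increasing $v_a^2$ enlarges the inverse-Gaussian law of $l_n(i)$ in the convex order while preserving its mean, and since $P(t)=\sum_{k\le t}k^z$ is convex on the support, $E[P(l_n(i))]$ and therefore $E[AoI_a^z]$ increase in $y$. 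The passive-user statements follow verbatim after replacing $(m_a,v_a^2)$ by $(m_p,v_p^2)$.
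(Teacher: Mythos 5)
Your core argument is the paper's own proof. The paper also writes each term of (\ref{eq:active_aoi}) as a constant times $\frac{E[l_n(i)^k]}{E[l_n(i)]}$, expands that ratio via (\ref{eq:temp_ratio}), and observes that the $\zeta$-th term is $\frac{(k-1+\zeta)!}{2^{\zeta}\zeta!(k-1-\zeta)!}\,\frac{1}{m_a^{k-1-\zeta}}\,\big(\frac{v_a^2}{m_a^2}\big)^{\zeta}$ with $k-1\geq\zeta$, i.e.\ exactly your positive-coefficient polynomial in $x=1/m_a$ and $y=v_a^2/m_a^2$; it then concludes termwise monotonicity and stops. Where you genuinely diverge is that you notice the Faulhaber coefficients are rescaled Bernoulli numbers and therefore not all positive ($B_4, B_8,\dots<0$), so for $z\geq 5$ the termwise conclusion does not follow immediately. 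That observation is correct and is a real issue the paper's proof silently skips (for $z\leq 4$ the only negative contribution reduces to an additive constant, so the paper's argument is literally fine there).

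However, neither of your proposed patches is complete, so as written there is still a gap. The ``term-by-term pairing for $x\geq 4$'' is asserted rather than carried out: the negative coefficients scale like $\frac{|B_k|}{k!}\frac{z!}{(z-k+1)!}$, which grows with $z$, and you have not shown that the required threshold on $x$ stays below the fixed physical bound $1/m_a\geq 4$ uniformly in $z$; sitting ``one $x$-degree below'' a positive monomial is also not quite right, since the adjacent positive Bernoulli block differs by two degrees. The convex-order route for the $y$-direction rests on two unproven claims: that the inverse Gaussian family $IG(1/m_a,1/v_a^2)$ with fixed mean increases in convex order as $v_a^2$ grows, and that $P(t)=\sum_{k\leq t}k^z$ is convex on the support. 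The latter is false near the origin for $z\geq 5$ (e.g.\ for $z=5$, $P''(0)=-1/6$), and the inverse Gaussian puts mass arbitrarily close to $0$, so you would at least need to argue that the non-convex region is negligible. In short: you reproduce the paper's argument, correctly identify its weak point, but do not yet close it.
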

\begin{proof}
    For active users, we note that each term in (\ref{eq:active_aoi}) is a constant multiplying $\frac{E[l_n(i)^k]}{E[l_n(i)]}$, where $k\geq 1$. For $\frac{E[l_n(i)^k]}{E[l_n(i)]}$, we have 
    \begin{align}
        \frac{E[l_n(i)^k]}{E[l_n(i)]} = \frac{1}{m_a^{k-1}}\sum_{\zeta=0}^{k-1}\frac{(k-1+\zeta)!}{\zeta!(k-1-\zeta)!}(\frac{2m_a}{v_a^2})^{-\zeta},
        \label{eq:temp_ratio}
    \end{align}
    whose $\zeta$-th term is $\frac{(k-1+\zeta)!}{2^{\zeta}\zeta!(k-1-\zeta)!} \frac{1}{m_a^{k-1-\zeta}} \frac{v_a^{2\zeta}}{m_a^{2\zeta}}$. Since $k-1\geq \zeta$, this $\zeta$-th term in (\ref{eq:temp_ratio}) is an increasing function of $\frac{1}{m_a}$ and $\frac{v_a^2}{m_a^2}$. Thus, $\frac{E[l_n(i)^k]}{E[l_n(i)]}$ is an increasing function of $\frac{1}{m_a}$ and $\frac{v_a^2}{m_a^2}$, for any $k\geq 1$. Therefore, $E[AoI_a^z]$ is an increasing function of $\frac{1}{m_a}$ and $\frac{v_a^2}{m_a^2}$.
    
    Similarly, $E[AoI_p^z]$ is an increasing function of $\frac{1}{m_p}$ and $\frac{v_p^2}{m_p^2}$.
\end{proof}

Based on Lemma ~\ref{lemma:incr}, we further have
\begin{lemma}
    For any $w\in[0,1]$, The optimal $\lambda$ that minimizes $F(r,s)$ is in range $[0,\frac{1}{N}]$.
    \label{lemma:opt_l}
\end{lemma}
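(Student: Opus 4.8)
The plan is to combine Lemma~\ref{lemma:incr} with a monotonicity analysis of the four quantities $\frac{1}{m_a}$, $\frac{v_a^2}{m_a^2}$, $\frac{1}{m_p}$, $\frac{v_p^2}{m_p^2}$ as functions of $\lambda$. By Lemma~\ref{lemma:incr}, $F(r,s)$ is increasing in each of these, and $F$ is a weighted average of $E[AoI_a^z]$ and $E[AoI_p^z]$. Hence it suffices to show that, starting from any feasible $(\lambda,\theta)$ with $\lambda>\frac{1}{N}$ and decreasing $\lambda$ toward $\frac{1}{N}$, none of the four quantities increases; this makes both $E[AoI_a^z]$ and $E[AoI_p^z]$ non-increasing, so $F$ cannot strictly decrease by placing $\lambda$ above $\frac{1}{N}$, for every $w\in[0,1]$.

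I would first dispose of the mean terms, which are the decisive and easy part. Differentiating $m_a=\lambda(1-\lambda)^{N-1}$ gives $m_a'=(1-\lambda)^{N-2}(1-N\lambda)$, so $m_a$ is strictly increasing on $[0,\frac{1}{N}]$ and strictly decreasing on $[\frac{1}{N},1]$, i.e.\ unimodal with its unique maximum at $\lambda=\frac{1}{N}$. Thus for $\lambda>\frac{1}{N}$, decreasing $\lambda$ strictly increases $m_a$ and strictly decreases $\frac{1}{m_a}$. Similarly $m_p=(1-\lambda)^{CN}$ is strictly decreasing in $\lambda$, so decreasing $\lambda$ increases $m_p$ and decreases $\frac{1}{m_p}$. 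Hence the mean contributions to both $E[AoI_a^z]$ and $E[AoI_p^z]$ unambiguously favor $\lambda\le\frac{1}{N}$.

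The remaining, and main, obstacle is the temporal-variance terms, which are not monotone in $\lambda$ for a fixed $\theta$. Here I would rewrite the variance ratios through the conditional delivery probabilities $q_k^a:=(\lambda+(1-\lambda)\theta^{k})(1-\lambda+\lambda\theta^{k})^{N-1}$ and $q_k^p:=(1-\lambda+\lambda\theta^{k})^{CN}$, using Theorem~\ref{thm:mean_vairance} to obtain $\frac{v_a^2}{m_a^2}=\frac{1}{m_a}-1+2\sum_{k\ge1}\bigl(\tfrac{q_k^a}{m_a}-1\bigr)$ together with the analogous identity for the passive user. The summands $\tfrac{q_k^a}{m_a}-1=(1+\tfrac{1-\lambda}{\lambda}\theta^{k})(1+\tfrac{\lambda}{1-\lambda}\theta^{k})^{N-1}-1$ blow up near both $\lambda=0$ and $\lambda=1$, so the correlation sum on its own is minimized at an interior point that can lie strictly above $\frac{1}{N}$, and its direction of variation depends on the sign of $\theta$. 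The crux is therefore to control the joint effect of $\lambda$ and $\theta$: I would first argue that any minimizer must satisfy $\theta\le0$, since a positively correlated transmission process ($\theta>0$, i.e.\ $q_k^a>m_a$) only inflates $\sum_{k\ge1}(q_k^a-m_a)$ and hence $v_a^2$ for a given mean, and symmetrically for the passive process; and then show that once $\theta\le0$ the correlation contribution combines with the already-established mean effect so that both $\frac{v_a^2}{m_a^2}$ and $\frac{v_p^2}{m_p^2}$ do not increase as $\lambda$ is decreased toward $\frac{1}{N}$. Establishing the sign of the optimal $\theta$ and the resulting monotonicity of these autocorrelation sums is the hard step; given it, an application of Lemma~\ref{lemma:incr} closes the argument and yields the claimed range $[0,\frac{1}{N}]$.
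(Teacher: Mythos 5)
Your outline reproduces the paper's strategy for this lemma essentially verbatim: invoke Lemma~\ref{lemma:incr}, observe that $m_a=\lambda(1-\lambda)^{N-1}$ peaks at $\lambda=\tfrac1N$ and $m_p=(1-\lambda)^{CN}$ is decreasing, reduce to $\theta\le 0$ by noting that positive correlation only inflates the temporal variances at fixed means, and then argue that $\tfrac{v_a^2}{m_a^2}$ and $\tfrac{v_p^2}{m_p^2}$ are also non-decreasing in $\lambda$ on $[\tfrac1N,1]$. The mean part and the reduction to $\theta\le0$ are correct and match the paper's Appendix~\ref{app:opt_l}.

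The genuine gap is that the decisive step --- $\frac{\partial(v_a^2/m_a^2)}{\partial\lambda}\ge0$ and $\frac{\partial(v_p^2/m_p^2)}{\partial\lambda}\ge0$ for $\lambda\ge\tfrac1N$, $\theta\le0$ --- is announced as what you ``would then show'' but never argued, and it is where all the difficulty of the lemma lives. For $\theta<0$ the correlation sum $2\sum_{k\ge1}(q_k^a/m_a-1)$ is an alternating series in $k$, and the paper's proof must: split on the sign of $(N-2)\lambda^2+2\lambda-1$; pair each even-$k$ term with an adjacent odd-$k$ term and bound their ratio; treat $\theta=-1$ and the boundary $\theta=-\tfrac{\lambda}{1-\lambda}$ separately; and, crucially, for part of the parameter range it cannot show the correlation sum alone is monotone --- it can only show that the sum \emph{plus} the mean contribution $(N\lambda-1)(1-\lambda)^{N-2}/m_a^2$ is nonnegative. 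So your plan of establishing that the variance ratio ``does not increase as $\lambda$ is decreased'' independently of the mean effect is not even the statement that ends up being provable; the two contributions have to be combined. Without some version of this case analysis (or a genuinely different bound on the autocorrelation sums), the proposal does not establish the lemma.
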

\begin{proof}[Sketch of proof]

    Recall $m_a = \lambda (1-\lambda)^{N-1}$ and $m_p=(1-\lambda)^{CN}$. When $\lambda>\frac{1}{N}$, it is easy to verify that $m_a$ and $m_p$ decreases as $\lambda$ increases.
    
    In Appendix~\ref{app:opt_l}, we further show that, when $\lambda \geq \frac{1}{N}$, $\frac{\partial (v_a^2/m_a^2)}{\partial \lambda} \geq 0$ and $\frac{\partial (v_p^2/m_p^2)}{\partial \lambda} \geq 0$. 
    
    Therefore, when $\lambda\geq \frac{1}{N}$, ($\frac{1}{m_a}$, $\frac{v_a^2}{m_a^2}$, $\frac{1}{m_p}$, $\frac{v_p^2}{m_p^2})$ all increase as $\lambda$ increases. Since $E[AoI_a^z]$ and $E[AoI_p^z]$ are increasing functions of $(\frac{1}{m_a}, \frac{v_a^2}{m_a^2})$ and $(\frac{1}{m_p}, \frac{v_p^2}{m_p^2})$ respectively, the optimal $\lambda$ is less than or equal to $\frac{1}{N}$.
\end{proof}


Next, we analyze the optimal $\theta$. The lemma below derives the optimal $\theta$ when $\lambda$ is given and fixed.
\begin{lemma}
When $\lambda \in [0, \min\{\alpha,\beta\}]$ is fixed, setting $\theta=-\frac{\lambda}{1-\lambda}$, or, equivalently, setting $r = \frac{\lambda}{1-\lambda}$ and $s = 1$, minimizes $E[AoI_a^z]$ and $E[AoI_p^z]$, where $\alpha$ and $\beta$ are the smallest positive roots of
\begin{align*}
    h_N(y) &=-(N+8)y^3 - (N-13) y^2 - 6y + 1, \\
    \Bar{h}_{CN}(y) &= (CN-10)y^3 - (CN-13)y^2 - 6y + 1,
\end{align*}
respectively.
\label{lemma:opt_theta}
\end{lemma}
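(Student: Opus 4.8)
The plan is to fix $\lambda$ and reduce the claim to a statement about the monotonicity of the temporal variances in $\theta$. First I would observe that, by Theorem~\ref{thm:mean_vairance}, $m_a=\lambda(1-\lambda)^{N-1}$ and $m_p=(1-\lambda)^{CN}$ depend only on $\lambda$, so with $\lambda$ held fixed they are constants. Consequently $\frac{1}{m_a}$ and $\frac{1}{m_p}$ do not vary with $\theta$, and by Lemma~\ref{lemma:incr} the quantity $E[AoI_a^z]$ is, as a function of $\theta$, increasing in $\frac{v_a^2}{m_a^2}$, i.e.\ increasing in $v_a^2$; likewise $E[AoI_p^z]$ is increasing in $v_p^2$. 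Since $F(r,s)$ is a convex combination of these two, it therefore suffices to show that both $v_a^2$ and $v_p^2$, viewed as functions of $\theta$ with $\lambda$ fixed, attain their minimum at $\theta=-\frac{\lambda}{1-\lambda}$.

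Next I would pin down the feasible range of $\theta$. Using $r=\lambda(1-\theta)$ and $s=(1-\lambda)(1-\theta)$, the constraints $r,s\in[0,1]$ become $\theta\le1$, $\theta\ge1-\frac{1}{\lambda}$, and $\theta\ge-\frac{\lambda}{1-\lambda}$. Because one checks $h_N(\tfrac12)=-\frac{3N-2}{8}<0$ and $\bar h_{CN}(\tfrac12)=-\frac{CN}{8}<0$ while $h_N(0)=\bar h_{CN}(0)=1>0$, the smallest positive roots satisfy $\alpha,\beta<\tfrac12$, so the hypothesis forces $\lambda<\tfrac12$; in this regime $-\frac{\lambda}{1-\lambda}$ is the binding lower bound and the feasible interval is $\theta\in[-\frac{\lambda}{1-\lambda},1]$, whose left endpoint is exactly the candidate point (equivalently $s=1$). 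Thus the goal becomes to show that $v_a^2$ and $v_p^2$ are nondecreasing in $\theta$ on this interval.

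The core of the argument is a sign analysis of the derivative. Differentiating the series for $v_a^2$ term by term gives
\begin{align*}
\frac{\partial v_a^2}{\partial\theta}=2m_a\sum_{k=1}^{\infty}k\,\theta^{k-1}(1-\lambda+\lambda\theta^k)^{N-2}\bigl[(1-\lambda)^2+(N-1)\lambda^2+N\lambda(1-\lambda)\theta^k\bigr],
\end{align*}
and an analogous expression holds for $v_p^2$ with summand $CN\lambda\,k\,\theta^{k-1}(1-\lambda+\lambda\theta^k)^{CN-1}$. On $[0,1)$ every summand is nonnegative, so monotonicity is immediate there. On $[-\frac{\lambda}{1-\lambda},0]$ one verifies that the bracket and the power factor stay positive (using $\lambda<\tfrac12$ and $|\theta|\le\frac{\lambda}{1-\lambda}$), so the sign of the $k$-th summand is that of $\theta^{k-1}$: odd $k$ contribute nonnegatively and even $k$ contribute nonpositively. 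I would then lower-bound the derivative by the dominant $k=1$ term minus the magnitudes of the even-indexed and higher-order terms, estimating the latter geometrically through $|\theta|\le\frac{\lambda}{1-\lambda}$. The plan is for this reduction to collapse to the single cubic condition $h_N(\lambda)\ge0$ (respectively $\bar h_{CN}(\lambda)\ge0$); since $h_N(0)=1>0$ and $\alpha$ is its smallest positive root, $h_N(\lambda)>0$ whenever $\lambda\in[0,\alpha]$, and similarly for $\bar h_{CN}$ and $\beta$. Taking $\lambda\le\min\{\alpha,\beta\}$ makes both derivatives nonnegative, so both variances are minimized at the left endpoint, which proves the lemma.

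The main obstacle I anticipate is precisely this last bounding step: extracting the clean degree-three polynomials $h_N$ and $\bar h_{CN}$ from an infinite alternating-sign series rather than a messy transcendental inequality. The delicate points are that the $k=2$ (and perhaps $k=3$) negative terms must be retained essentially exactly while the remaining tail is controlled by a convergent geometric bound in $\frac{\lambda}{1-\lambda}$, and that these estimates must be simultaneously tight enough to leave a genuine positive cubic and loose enough to sum in closed form. Getting the coefficients $(N+8,\,N-13,\,6,\,1)$ and $(CN-10,\,CN-13,\,6,\,1)$ to come out exactly is where the real work lies; the reduction in the first two paragraphs and the term-by-term differentiation are routine by comparison.
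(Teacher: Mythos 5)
Your setup coincides with the paper's: fix $\lambda$ so that $m_a,m_p$ are constants, reduce to showing $v_a^2$ and $v_p^2$ are nondecreasing in $\theta$ over the feasible interval $[-\frac{\lambda}{1-\lambda},1]$ (using $\alpha,\beta<\tfrac12$ to identify the binding endpoint $s=1$), differentiate the series term by term, and observe that your bracket $(1-\lambda)^2+(N-1)\lambda^2+N\lambda(1-\lambda)\theta^k$ equals the paper's $1-2\lambda+N\lambda^2+N\lambda(1-\lambda)\theta^k$, so that only even-$k$ terms can be negative when $\theta<0$. Up to this point the argument is correct and routine, as you say.

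The gap is that the step which actually constitutes the lemma --- producing the specific cubics $h_N$ and $\bar h_{CN}$ whose smallest positive roots define $\alpha$ and $\beta$ --- is left as a hope that the computation ``collapses'' to $h_N(\lambda)\ge 0$, and the grouping you propose would not deliver it. The paper does not weigh the $k=1$ term against a geometric tail; it pairs each negative even term with the \emph{preceding} odd term and proves $-\psi(2i-1)/\psi(2i)\ge 1$ for every $i$. Bounding that ratio at the worst case ($\theta=-\frac{\lambda}{1-\lambda}$, effectively the pair $k=1,2$) gives a lower bound of the form $\frac{1-\lambda}{2\lambda}\cdot\frac{1-3\lambda+2\lambda^2}{1-3\lambda+(N+2)\lambda^2}\bigl(1-\frac{\lambda^2}{1-3\lambda+3\lambda^2}\bigr)^{N-2}$, and it is the Bernoulli inequality $(1-x)^{N-2}\ge 1-(N-2)x$ applied to the last factor, followed by clearing denominators against $\bar B(y)=\frac{2y}{1-y}$, that makes the coefficients $-(N+8),-(N-13),-6,1$ appear: $C_N(y)-\bar B(y)=\frac{h_N(y)}{(1-3y+(N+2)y^2)(1-y)}$. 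Your alternative plan (retain $k=2,3$ exactly, control the rest geometrically in $\frac{\lambda}{1-\lambda}$) would, even if it closed, yield a different sufficient condition with different polynomials, so it cannot establish the lemma as stated with these particular $\alpha$ and $\beta$. Since identifying those thresholds is the entire content of the statement, the proposal as written does not prove it; to repair it you should switch to the consecutive-pair domination argument and carry out the binomial-inequality reduction explicitly (and analogously for $\bar h_{CN}$ on the passive side, where the same pairing with exponent $CN-1$ produces $\bar A_{CN}(y)-\bar B(y)=\frac{\bar h_{CN}(y)}{(3y^2-3y+1)(1-y)}$).
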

\begin{proof}[Sketch of proof]
    When $\lambda$ is determined, $m_a = \lambda (1-\lambda)^{N-1}$ and $m_p=(1-\lambda)^{CN}$ are determined. Hence, by Theorem~\ref{theorem:aoi_z}, $E[AoI_a^z]$ and $E[AoI_p^z]$ are increasing functions of $v_a^2$ and $v_p^2$ respectively.

    In Appendix~\ref{app:active}, we further establish that $\frac{\partial v_a^2}{\partial \theta} \geq 0$ when $\lambda\leq \alpha$, and $\frac{\partial v_p^2}{\partial \theta} \geq 0$ when $\lambda \leq \beta$. So, when $\lambda \in [0, \min\{\alpha,\beta\}]$, minimum $\theta$ will minimize $E[AoI_a^z]$ and $E[AoI_p^z]$. In Appendix~\ref{app:active}, we also show that $\alpha, \beta < \frac{1}{2}$, and hence the minimum $\theta$ is achieved when $r=\frac{\lambda}{1-\lambda}$ and $s=1$.

    Therefore, when $\lambda \in [0, \min\{\alpha,\beta\}]$ is fixed, choosing $r=\frac{\lambda}{1-\lambda}$ and $s=1$ minimizes $E[AoI_a^z]$ and $E[AoI_p^z]$.
\end{proof}

A limitation of Lemma~\ref{lemma:opt_theta} is that it requires $\lambda \in [0, \min\{\alpha,\beta\}]$. In the following lemma, we establish lower bounds of $\alpha$ and $\beta$.
\begin{lemma} 
$\alpha > \frac{1}{N}$ when $N > 4$, and $\beta > \frac{1}{N}$ when $N > C+4$.
\label{lemma_alpha}
\end{lemma}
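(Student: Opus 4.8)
The goal is to prove that $\alpha > \frac{1}{N}$ when $N > 4$, and $\beta > \frac{1}{N}$ when $N > C+4$, where $\alpha$ and $\beta$ are the smallest positive roots of the cubics
\begin{align*}
    h_N(y) &= -(N+8)y^3 - (N-13)y^2 - 6y + 1, \\
    \bar{h}_{CN}(y) &= (CN-10)y^3 - (CN-13)y^2 - 6y + 1.
\end{align*}

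My plan is to reduce both claims to sign conditions on the cubics. Let me explain the idea for $\alpha$. At $y=0$ we have $h_N(0) = 1 > 0$. Since $\alpha$ is the \emph{smallest} positive root, establishing $\alpha > \frac{1}{N}$ is equivalent to showing that $h_N(y)$ has no root in the open interval $(0, \frac{1}{N}]$, which (given continuity and $h_N(0)>0$) would follow if $h_N(y) > 0$ for all $y \in (0, \frac{1}{N}]$ — in particular it suffices to verify $h_N(\frac{1}{N}) > 0$ together with the fact that $h_N$ stays positive up to that point. To make this rigorous I would analyze the derivative $h_N'(y) = -3(N+8)y^2 - 2(N-13)y - 6$ on $[0,\frac{1}{N}]$, or more simply bound $h_N$ from below on this interval. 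The cleanest route is to evaluate $h_N(\frac{1}{N})$ directly: substituting $y = \frac{1}{N}$ gives
\begin{align*}
    h_N\!\left(\tfrac{1}{N}\right) = -\frac{N+8}{N^3} - \frac{N-13}{N^2} - \frac{6}{N} + 1,
\end{align*}
and multiplying through by $N^3$ yields $N^3 - 6N^2 - (N-13)N - (N+8) = N^3 - 7N^2 + 12N - 8 - N = N^3 - 7N^2 + 11N - 8$ (I would recompute this carefully). The task then becomes showing this polynomial in $N$ is positive for integer $N > 4$, which is a routine factorization/monotonicity check, and then ruling out any earlier crossing on $(0,\frac{1}{N})$.

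The main obstacle is the second part: merely checking the sign at the endpoint $y = \frac{1}{N}$ does not by itself guarantee that $\alpha$, the \emph{smallest} root, exceeds $\frac{1}{N}$, because a cubic could in principle dip below zero and come back up inside $(0,\frac{1}{N})$. I must therefore also control the shape of $h_N$ on this interval. I expect that on $[0,\frac{1}{N}]$ the cubic is either monotone or has no sign change, and the way to certify this is to show $h_N$ is positive throughout — e.g. by showing $h_N$ is concave/decreasing but stays above zero, or by bounding the negative cubic and quadratic terms. Since $y \le \frac{1}{N}$ is small, the term $-6y + 1$ dominates and is positive (it equals $1 - \frac{6}{N} > 0$ once $N>6$; for $N \in \{5,6\}$ I would check the finitely many cases by hand), while the cubic and quadratic corrections are $O(1/N^2)$ and $O(1/N)$ respectively and can be absorbed. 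Handling the small cases $N \in \{5, 6\}$ separately and giving a clean lower bound argument for large $N$ is where the real care lies.

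For $\beta$ the structure is identical but with the substitution $CN$ in place of the appropriate coefficients, so I would repeat the endpoint evaluation at $y = \frac{1}{N}$ (note: still $\frac{1}{N}$, not $\frac{1}{CN}$) for $\bar{h}_{CN}$. Computing $N^3 \bar{h}_{CN}(\frac{1}{N})$ gives $N^3 + (CN-10)\frac{1}{1} \cdot \frac{1}{1} \cdots$ — concretely $(CN-10) + (CN-13)(-1)\cdot N/N \ldots$; I would carefully expand $\bar{h}_{CN}(\frac{1}{N}) = (CN-10)\frac{1}{N^3} - (CN-13)\frac{1}{N^2} - \frac{6}{N} + 1$ and multiply by $N^3$ to obtain a polynomial in $N$ and $C$ whose positivity for $N > C+4$ I would establish. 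The appearance of the sharper threshold $N > C+4$ (rather than $N>4$) should emerge naturally from this inequality, since the $(CN-13)y^2$ term contributes a term scaling with $C$; tracking how $C$ enters the dominant balance is the one extra subtlety relative to the $\alpha$ case. As before, the same argument ruling out an earlier sign change on $(0,\frac{1}{N})$ applies, completing the proof.
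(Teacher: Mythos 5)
Your overall strategy---evaluate the cubic at $y=\tfrac{1}{N}$, show it is positive there, and separately rule out a root in $(0,\tfrac{1}{N})$---is the same skeleton as the paper's proof, and you correctly identify that the endpoint check alone does not suffice. However, there are two concrete problems. First, the arithmetic: $N^3 h_N(\tfrac{1}{N}) = N^3 - 7N^2 + 12N - 8$, not $N^3 - 7N^2 + 11N - 8$. This is not a harmless slip, because the margin is razor-thin: your polynomial gives $-3$ at $N=5$ (so the claim would appear false exactly at the boundary case), while the correct one gives $+2$, with unique real root $\approx 4.875$. The whole lemma lives or dies on this exact evaluation.

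Second, and more substantively, the mechanism you propose for excluding an earlier sign change---``$-6y+1$ dominates and the cubic and quadratic corrections are $O(1/N^2)$ and $O(1/N)$''---does not close the gap. For $h_N$ it already fails for $N\in\{5,6\}$ (where $1-\tfrac{6}{N}\le 0$), and for $\bar h_{CN}$ it fails for arbitrarily large $N$: the term $-(CN-13)y^2$ contributes $-\Theta(C/N)$ at $y=\tfrac{1}{N}$, which is not absorbable when $C$ is comparable to $N$ (e.g.\ $N=C+5$). The paper instead supplies the missing structural argument: for $h_N$ it shows $h_N'(y)<0$ for all $y\ge 0$ (the quadratic $h_N'$ has no nonnegative real roots and $h_N'(0)=-6$), so $h_N$ is strictly decreasing on $[0,\infty)$ and $h_N(\tfrac{1}{N})>0$ immediately yields $\alpha>\tfrac{1}{N}$; for $\bar h_{CN}$, whose leading coefficient $CN-10$ changes sign, it does a three-way case analysis and a root count via the signs at $-\infty,0,1,+\infty$ to show there is exactly one root in $(0,1)$, so that $\bar h_{CN}(y)>0$ iff $y<\beta$ on that interval. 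Your statement that ``the same argument applies'' to $\bar h_{CN}$ overlooks that this cubic is not globally decreasing when $CN>10$; some version of the paper's case analysis (or at least a proof that $\bar h_{CN}'<0$ on $[0,\tfrac{1}{N}]$) is genuinely required.
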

\begin{proof}[Sketch of proof]
    In Appendix~\ref{app:lemma_a}, we prove that $h_N(y)$ and $\Bar{h}_{CN}(y)$ both have only one root in $[0,1]$. 
    
    We further show that $h_N(\frac{1}{N})>0$ when $N > 4$, and $\Bar{h}_{CN}(\frac{1}{N})>0$ when $N > C+4$, in Appendix~\ref{app:lemma_a}. Since $h_N(1)=-2N<0$ and $\Bar{h}_N(1) = -2<0$, we have $\alpha > \frac{1}{N}$ when $N > 4$, and $\beta > \frac{1}{N}$ when $N > C+4$.
\end{proof}

We are now ready to prove Theorem~\ref{theorem:opt_sol}.
\begin{proof}[Proof of Theorem~\ref{theorem:opt_sol}]
    It is shown in Lemma~\ref{lemma:opt_l} that the optimal $\lambda$ is in the range $[0,\frac{1}{N}]$, which is covered by the range $[0,\min\{\alpha,\beta\}]$ when $N>C+4$ according Lemma~\ref{lemma_alpha}. Therefore, the optimal solution in Lemma~\ref{lemma:opt_theta} holds for the optimal $\lambda$ when $N>C+4$. Hence, when $N>C+4$, choosing $r=\frac{\lambda}{1-\lambda}$ and $s=1$ minimizes $E[AoI_a^z]$ and $E[AoI_p^z]$, as well as $F(r,s)$.
\end{proof}



\section{Simulation Results}
\label{sec:sim}

This section presents simulation results for applying our solution of $r$ and $s$. The performance function is the optimization objective $F(r,s)$.

The results of our solution are compared with three other baseline policies. We provide descriptions of all four policies, along with necessary modifications to fit the simulation setting.
\begin{itemize}
    \item \textbf{Our solution}: Based on Theorem~\ref{theorem:opt_sol}, we choose the optimal $\lambda$ and set $r=\frac{\lambda}{1-\lambda}$ and $s=1$ for all active users. This optimal $\lambda$ is obtained by plugging $r=\frac{\lambda}{1-\lambda}$ and $s=1$ into Theorem~\ref{theorem:aoi_z} and doing an exhaustive search over $\lambda$ for $F(r,s)$ with a precision $0.01$.
    \item \textbf{Slotted ALOHA}: When $r+s=1$, our active user model is reduced to be slotted ALOHA, which is a famous and commonly used model for transmissions with no acknowledgment. According to solutions in \cite{yates2017status}, we choose $r=\frac{1}{N}$ and $s=1-\frac{1}{N}$. 
    \item \textbf{Optimal ALOHA}: After introducing passive users, the optimal $\lambda$ in slotted ALOHA is no longer $\frac{1}{N}$. Thus, we simulate all possible $\lambda$ with precision $0.01$, and then choose the best $\lambda$ to obtain optimal slotted ALOHA. It should be noticed that optimal ALOHA is impossible to implement in real world since its $\lambda$ is determined after the whole process is finished. 
    \item \textbf{Age Threshold ALOHA (ATA)}: Age threshold ALOHA is proposed by Yavascan and Uysal \cite{yavascan2021analysis}, which is a slotted ALOHA algorithm with an age threshold. If the AoI of one active user is larger than a threshold, this active user will follow slotted ALOHA transmission with probability $r$. Otherwise, this active user will stay silent. Since there is no acknowledgment in the system, we let active users assume their transmissions are always successful. As suggested in \cite{yavascan2021analysis}, $r=4.69/N$ and the age threshold is $2.2N$.
\end{itemize}

We simulate different scenarios with varying values of $w$, where $w$ ranges from $0$ to $1$. We set $N=7$, $C=2$ or $4$, and $z=1$ or $2$. Once $N$, $C$ and $z$ are determined, we can set the values of $r$ and $s$ based on policy designs. The system is then simulated for $10$ individual runs, each with $100,000$ time slots. The final results are the average of these $10$ runs. We present the value of $\frac{\text{Actual } F(r,s) \text{ of one algorithm}}{\text{Theoretical } F(r,s) \text{ of our solution}}$ for all four algorithms under each setting. Simulation results are shown in Fig.~\ref{fig:sim}. As some values are extremely large, only results that have ratio below $2.4$ are shown in figures.

\begin{figure}[thb]
    \subfloat[$C=2$, $z=1$.]{\includegraphics[width = 1.65in]{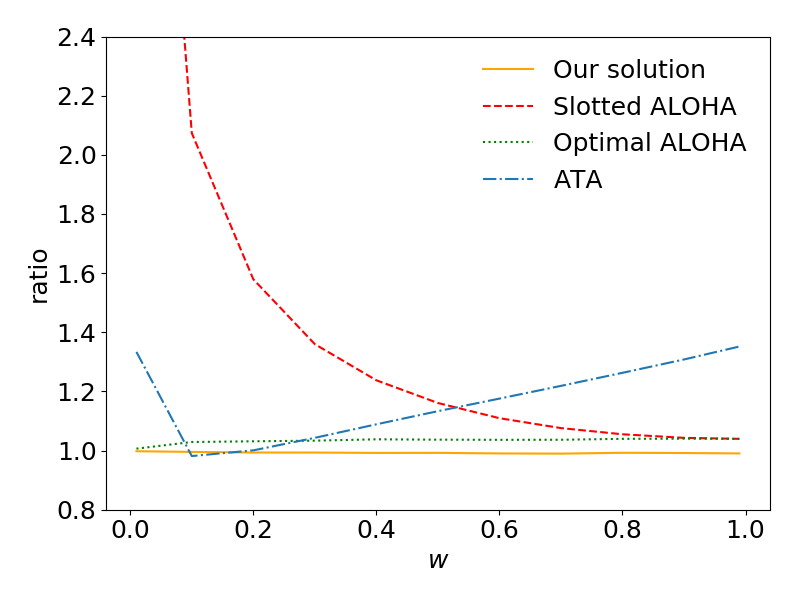}\label{fig:sim_1}}
    \subfloat[$C=2$, $z=2$.]{\includegraphics[width = 1.65in]{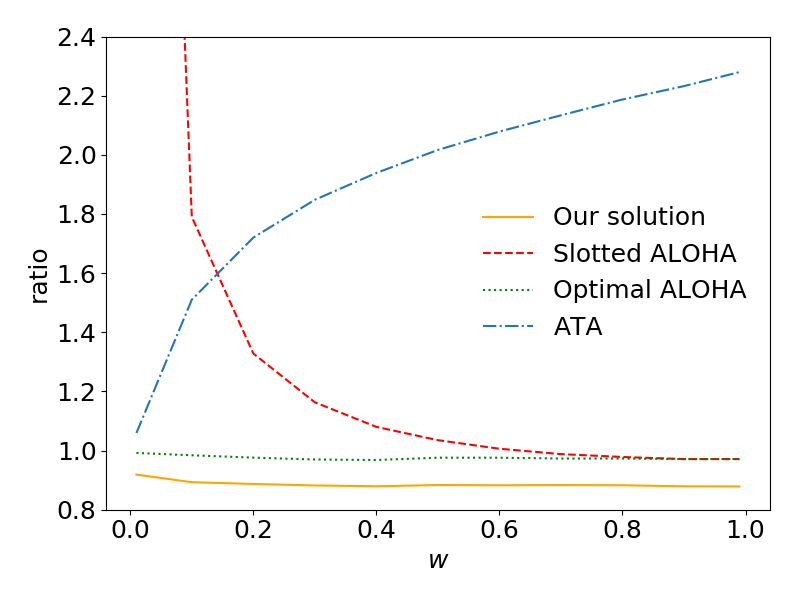}\label{fig:sim_2}}\\
    \subfloat[$C=4$, $z=1$.]{\includegraphics[width = 1.65in]{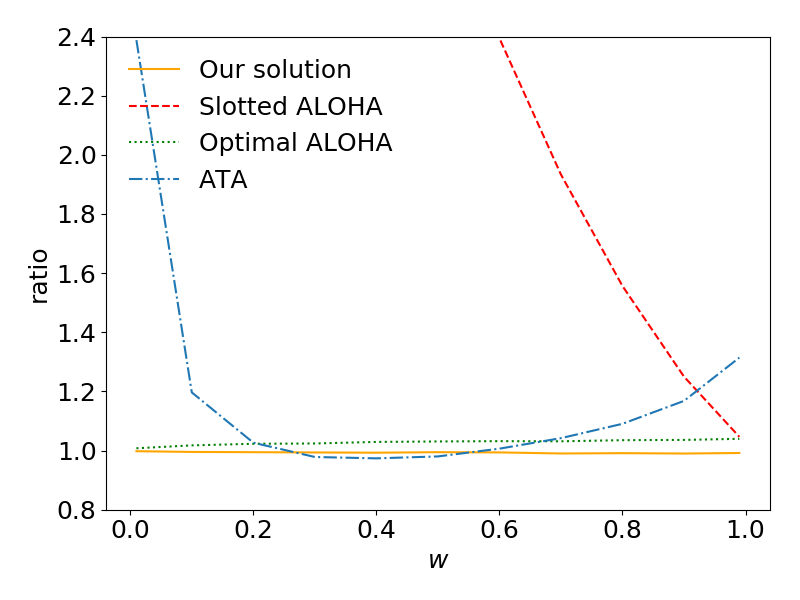}\label{fig:sim_3}}
    \subfloat[$C=4$, $z=2$.]{\includegraphics[width = 1.65in]{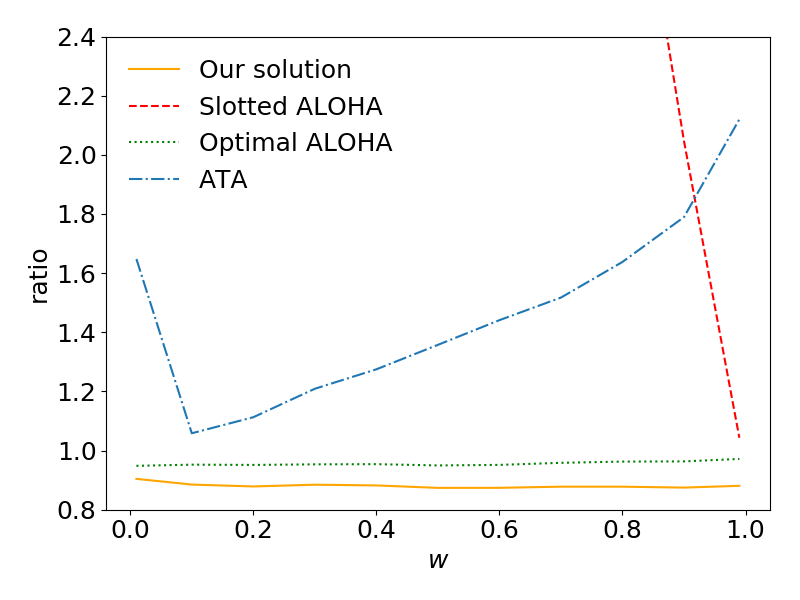}\label{fig:sim_4}}
    \caption{$\frac{\text{Actual } F(r,s) \text{ of one algorithm}}{\text{Theoretical } F(r,s) \text{ of our solution}}$  when $N=7$}
    \label{fig:sim}
\end{figure}

The results show that the average mismatches between theoretical $F(r,s)$ and actual $F(r,s)$ of our solution are about $0.7\%$ when $z=1$ and $12\%$ when $z=2$. Furthermore, we notice the actual value is always lower than theoretical value. These results indicate that our approximation is useful in practice.

Comparing the performance of different algorithms in Figure~\ref{fig:sim_1} and Figure~\ref{fig:sim_2}, we can observe that our strategy outperforms other three policies in all settings. Even when compared to the optimal ALOHA, which is not practical in real-world scenarios, our strategy shows a $4\%$ reduction when $z=1$ and a $9\%$ reduction when $z=2$. Additionally, the increase in the reduction of our strategy between $z=1$ and $z=2$ indicates that our strategy can better optimize the variance of AoI.

In Figure~\ref{fig:sim_3} and Figure~\ref{fig:sim_4}, we simulate $C=4$ to evaluate the performance of our strategy when the condition $N>C+4$ is not met. The Results indicate that our strategy outperforms the other three algorithms in nearly all settings, even when the condition $N>C+4$ is not satisfied.

\section{Conclusion}
\label{sec:conclusion}

This paper studies an optimization problem of moments of AoI for both active and passive users in a random access network. In this network, active users broadcast their transmissions following a Markov process and will interfere with each other. By applying a second-order model, we formulate moments of AoI as functions of state change probabilities in the Markov process. Next, we reveal the special properties of these functions and optimize moments of AoI for both active and passive users. The optimal solution indicates that the best strategy for active users is to become silent immediately after one transmission. Simulations show that this transmission strategy outperforms three baseline algorithms under various settings.

\section*{Acknowledgment}

This material is based upon work supported in part by NSF under Award Number ECCS-2127721, in part by ARL and ARO under Grant Number W911NF-22-1-0151, and in part by ONR under Contract N00014-21-1-2385.


\newpage

\IEEEtriggeratref{18}

\bibliographystyle{ieeetr}
\bibliography{reference}

\newpage

\appendix

\subsection{Proof of Theorem 1}
\label{app:mean}

    Let $x_{n,t}$ be the indicator that the active user $n$ is in TX state in time slot $t$.
    
    First, we calculate the mean for active user $n$ in cluster $c_i$ and passive user $j$. We have
    \begin{align*}
        m_a &= Prob(x_{n,t}=1)\prod_{\zeta\neq n, \zeta\in c_i}Prob(x_{\zeta,t}=0) \\
        &= \frac{r}{r + s}\Big(\frac{s}{r + s}\Big)^{N-1} = \lambda (1-\lambda)^{N-1},
    \end{align*}
    and $m_p = \prod_{\zeta=1}^{CN} Prob(x_{\zeta,t}=0) = (\frac{s}{r + s})^{CN} = (1-\lambda)^{CN}$.

    Then, we derive the temporal variance for active user $n$ and passive user $j$. According to the central limit theorem of Markov process, we have
    \begin{align*}
        v_a^2 &= Var(S_n(1)) + 2\sum_{k=1}^{\infty} Cov(S_n(1),S_n(k+1)), \\
        v_p^2 &= Var(S_j(1)) + 2\sum_{k=1}^{\infty} Cov(S_j(1),S_j(k+1)).
    \end{align*}
    
    We first calculate $v_a^2$. Since $S_n(1)$ is a Bernoulli random variable, it has
    \begin{align*}
         &Var(S_n(1))   = E[S_n(1)] - E[S_n(1)]^2 =  m_a - m_a^2 \\
         &Cov(S_n(1), S_n(k+1)) \\
         &= E[S_n(1)S_n(k+1)]-E[S_n(1)]E[S_n(k+1)] \\
         &=(Prob(S_n(k+1)=1|S_n(1)=1)-E[S_n(k+1)]) E[S_n(1)] \\
         & = (Prob(S_n(k+1)=1|S_n(1)=1) - m_a)m_a.
    \end{align*}
    Assume the active user $n$ is in cluster $c_i$, we have $Prob (S_n(k+1)=1|S_n(1)=1) =  Prob(x_{n,k+1}=1|S_n(1)=1)\prod_{\zeta\neq n, \zeta\in c_i} Prob(x_{\zeta,k+1}=0|S_n(1)=1)$.
    
    Let $G_{n,n}(k) = Prob(x_{n,k}=1|S_n(1)=1)$ and $G_{\zeta,n}(k) = Prob(x_{\zeta,k}=1|S_n(1)=1), l\neq n, l\in c_i$, we have $Prob (S_n(k)=1|S_n(1)=1) = G_{n,n}(k) \prod_{\zeta\neq n, \zeta\in c_i}(1-G_{\zeta,n}(k))$ and
    \begin{align*}
        v_a^2 =& 2\sum_{k=1}^{\infty} ( G_{n,n}(k+1) \prod_{\zeta\neq n}(1-G_{\zeta,n}(k+1)) -m_a) m_a \\
        &+ m_a - m_a^2.
    \end{align*}
    $G_{n,n}(k)$ can be calculated by
    \begin{align*}
        G_{n,n}(k) &= Prob(x_{n,k}=1|S_n(1)=1) \\
        &= G_{n,n}(k-1) Prob(x_{n,k}=1|x_{n,k-1}=1) \\
        &\quad + (1-G_{n,n}(k-1))Prob(x_{n,k}=1|x_{n,k-1}=0) \\
        &= G_{n,n}(k-1)(1-s) + (1-G_{n,n}(k-1))r \\
        &= r + \theta G_{n,n}(k-1),
    \end{align*}
    for $k>1$, and $G_{n,n}(1)=1$. By $\sum_{i=2}^{k}\theta^{k-i} G_{n,n}(i)$, we have
    \begin{align*}
        G_{n,n}(k) &= \sum_{i=2}^{k}\theta^{k-i}r + \theta^{k-1}G_{n,n}(1) \\
        &= \frac{r}{r+s}+\frac{s}{r+s}\theta^{k-1} = \lambda + (1-\lambda)\theta^{k-1}.
    \end{align*}
    
    For $G_{\zeta,n}(k)$, we also have
    \begin{align*}
        G_{\zeta,n}(k) &= Prob(x_{\zeta,k}=1|S_n(1)=1) \\
        &= G_{\zeta,n}(k-1) Prob(x_{\zeta,k}=1|x_{\zeta,k-1}=1) \\
        &\quad + (1-G_{\zeta,n}(k-1))Prob(x_{\zeta,k}=1|x_{\zeta,k-1}=0) \\
        &= G_{\zeta,n}(k-1)(1-s) + (1-G_{\zeta,n})(k-1)r \\
        &= r + \theta G_{\zeta,n}(k-1),
    \end{align*}
    for $k>1$, and $G_{\zeta,n}(1)=0$. Solving the recursive equation, we have $G_{\zeta,n}(k) = \lambda - \lambda \theta^{k-1}$. Therefore,
    \begin{align*}
        & v_a^2 =  2\sum_{k=1}^{\infty}((\lambda + (1-\lambda) \theta^{k})(1-\lambda+\lambda \theta^{k})^{N-1} -m_a) m_a \\
        &\quad + m_a - m_a^2.    
    \end{align*}
    
    Now, we calculate $v_p^2$. Sine $S_j(1)$ is also a Bernoulli random variable in the system, we have
    \begin{align*}
         &Var(S_j(1)) = E[S_j(1)] - E[S_j(1)]^2 = m_p - m_p^2 \\
         &Cov(S_j(1), S_j(k+1)) \\
         &= E[S_j(1)S_j(k+1)]-E[S_j(1)]E[S_j(k+1)] \\
        &=(Prob(S_j(k+1)=1|S_j(1)=1)-E[S_j(k+1)]) E[S_j(1)] \\
        & = (Prob(S_j(k+1)=1|S_j(1)=1) - m_p)m_p.
    \end{align*}
    
    Let $G_{n,j}(k) = Prob(x_{n,k}=1|S_j(1)=1)$, we have $Prob(S_j(k+1)=1|S_j(1)=1) = \prod_{n=1}^{CN}(1-G_{n,j}(k))$ and
    \begin{align*}
        v_p^2& = m_p- m_p^2 + 2\sum_{k=1}^{\infty} ( \prod_{n=1}^{CN}(1-G_{n,j}(k+1)) -m_p) m_p.
    \end{align*}
    $G_{n,j}(k)$ can be calculated by
    \begin{align*}
        G_{n,j}(k) &= Prob(x_{n,k}=1|S_j(1)=1) \\
        &= G_{n,j}(k-1) Prob(x_{n,k}=1|x_{n,k-1}=1) \\
        &\quad + (1-G_{n,j}(k-1))Prob(x_{n,k}=1|x_{n,k-1}=0) \\
        &= G_{n,j}(k-1)(1-s) + (1-G_{n,j})(k-1)r \\
        &= r + \theta G_{n,j}(k-1),
    \end{align*}
    for $k>1$, and $G_{n,j}(1)=0$. This expression is similar to that of $G_{\zeta,n}(k)$. Thus, solving the recursive equation, we have $G_{n,j}(k) = \lambda - \lambda \theta^{k-1}$. Thus,
    \begin{align*}
        & v_p^2 = 2\sum_{k=1}^{\infty}((1-\lambda + \lambda \theta^{k})^{CN} -m_p) m_p +  m_p - m_p^2.
    \end{align*}

\subsection{Proof of Lemma \ref{lemma:opt_l}}
\label{app:opt_l}
    First, we show that we only need to consider the case $\theta\leq 0$. The partial derivative of $\frac{v_a^2}{m_a}$ with respect to $\theta$ when $\lambda$ is fixed is
    \begin{align*}
         \frac{\partial v_a^2}{\partial \theta} &= \frac{2}{m_a}\sum_{k=1}^{\infty}[(1-\lambda)k\theta^{k-1}(1-\lambda+\lambda \theta^{k})^{N-1} \\
        & \quad +(\lambda+(1-\lambda)\theta^k)(N-1)\lambda k \theta^{k-1}(1-\lambda+\lambda \theta^k)^{N-2})] \\
        &= \frac{2}{m_a} \sum_{k=1}^{\infty}[1-2\lambda + N\lambda^2+ N\lambda(1-\lambda)\theta^{k}] \\
        & \quad \times k\theta^{k-1}(1-\lambda+\lambda \theta^{k})^{N-2}. 
    \end{align*}
    When $\theta>0$, it is easy to verify that the partial derivative is positive. Similarly, we also have the partial derivative of $\frac{v_p^2}{m_p}$ with respect to $\theta$ is positive when $\theta>0$. Thus, the optimal $\theta$ is non-positive, and hence we only consider $\theta\leq 0$ in the following proof.

    We first prove $\frac{\partial (v_a^2/m_a^2)}{\partial \lambda} \geq 0$ when $\lambda\geq \frac{1}{N}$. Based on Theorem~\ref{thm:mean_vairance}, we have
    \begin{align*}
        & \frac{\partial (v_a^2/m_a^2)}{\partial \lambda} = 2\sum_{k=1}^{\infty} \big[ \frac{(1-\theta^k)(1-\lambda+\lambda\theta^k)^{N-1}m_a}{m_a^2} \\
        & + \frac{(\lambda+(1-\lambda)\theta^k)(N-1)(1-\lambda+\lambda\theta^k)^{N-2}(\theta^k-1)m_a}{m_a^2} \\
        & -\frac{(1-N\lambda)(1-\lambda)^{N-2}(\lambda+(1-\lambda)\theta^k)(1-\lambda+\lambda\theta^k)^{N-1}}{m_a^2} \big]\\
        & -\frac{(1-N\lambda)(1-\lambda)^{N-2}}{m_a^2} \\
        & = \frac{(N\lambda-1)(1-\lambda)^{N-2}}{m_a^2} + 2\sum_{k=1}^{\infty} \frac{(1-\lambda)^{N-2}(1-\lambda+\lambda\theta^k)^{N-2}}{m_a^2} \\
        & \times \theta^k\big[ (N-2)\lambda^2(1-\lambda)\theta^k + (N-2)\lambda^2 +2\lambda -1 \big].
    \end{align*}
    Let $\omega(\lambda,\theta,k,N):=(1-\lambda)^{N-2}(1-\lambda+\lambda\theta^k)^{N-2}\theta^k[ (N-2)\lambda^2(1-\lambda)\theta^k + (N-2)\lambda^2 +2\lambda -1 ]$ be the $k$-th term in the above infinite summation. The following part is to prove $\sum_{k=1}^{\infty}\xi(\lambda,\theta,k,N) \geq 0$ or $\sum_{k=1}^{\infty}\xi(\lambda,\theta,k,N)+(N\lambda-1) \geq 0$. 
    
    Based on definition of $\lambda$ and $\theta$, we first have $(1-\lambda)\geq 0$ and $(1-\lambda+\lambda\theta^k)\geq 0$.

    Consider the special case $\theta=-1$. $\theta=-1$ can only be achieved when $\lambda=\frac{1}{2}$. Thus, $(1-\lambda+\lambda\theta^k)^{N-2}$ equals to $1$ when $k$ is even, and equals to $0$ when $k$ is odd. In this case, $(N-2)\lambda+2\lambda-1>0$. Thus, $\omega(\lambda,\theta,k,N) \geq 0$ for any $k$, and hence $\frac{\partial (v_a^2/m_a^2)}{\partial \lambda} \geq 0$.

    Now, we discuss the case $\theta > -1$.
    
    We first consider $(N-2)\lambda^2+2\lambda-1\geq 0$, which means $\lambda\in[\frac{\sqrt{N-1}-1}{N-2},1]$. In this case, $[ (N-2)\lambda^2(1-\lambda)\theta^k + (N-2)\lambda^2 +2\lambda -1 ]$ is non-negative for all even $k$, and is possible negative for odd $k$ (when $k$ is larger, equation becomes positive). 
    
    Consider $k$ from $2$ to infinite, we compare the value with an even $k=2q$ and its following odd $k=2q+1$. As $\theta \in (-1,0]$, we have $|(1-\lambda+\lambda\theta^{2q})^{N-2}|>|(1-\lambda+\lambda\theta^{2q+1})^{N-2}|$ and $|\theta^{2q}| > |\theta^{2q+1}|$. Since $\theta \leq 0$, when $k$ is even, we have $\theta^k\geq 0$ and $(N-2)\lambda^2(1-\lambda)\theta^k\geq 0$. When $k$ is odd, we have $\theta^k \leq 0$ and $(N-2)\lambda^2(1-\lambda)\theta^k\leq 0$. Hence, $\theta^k\big[ (N-2)\lambda^2(1-\lambda)\theta^k + (N-2)\lambda^2 +2\lambda -1 \big]\geq \theta^k \big[ (N-2)\lambda^2+2\lambda-1 \big]$, for all $k$. Let $\xi(k,\lambda,\theta,N):= 2(1-\lambda+\lambda\theta^k)^{N-2}\theta^k ( (N-2)\lambda^2+2\lambda-1 )$. Based on above results, it is easy to verify $\xi(2q,\lambda,\theta,N) + \xi(2q+1,\lambda,\theta,N) > 0$, and hence $\sum_{k=2}^{\infty}\xi(k,\lambda,\theta,N) > 0$.

    The remaining term is $\xi(1,\lambda,\theta,N)$ and $(N\lambda-1)$. Summing them, denote $\rho(\theta,\lambda,N):=2\theta[(N-2)\lambda^2(1+\theta)-(N-2)\lambda^3\theta+2\lambda-1] + N\lambda -1$, we have $2(1-\lambda+\lambda\theta)^{N-2}\theta[(N-2)\lambda^2(1+\theta)-(N-2)\lambda^3\theta+2\lambda-1]+ N\lambda - 1 \geq \rho(\theta,\lambda,N)$. The partial derivative of $\rho(\theta,\lambda,N)$ with respect to $\theta$ is $\frac{\partial\rho(\theta,\lambda,N)}{\partial \theta}=2[(N-2)\lambda^2(1+\theta)-(N-2)\lambda^3\theta+2\lambda-1] + 2(N-2)\lambda^2(1-\lambda)\theta$, which is an increasing function of $\theta$ (convex function in $\theta\in(-1,0)$). Choose $\theta=0$, $\frac{\partial\rho(\theta,\lambda,N)}{\partial \theta}|_{\theta=0}=2(N-2)\lambda^2+4\lambda-2 > 0$ when $\lambda\in[\frac{\sqrt{N-1}-1}{N-2},1]$. Choosing $\theta=-1$, the partial derivative becomes $\frac{\partial\rho(\theta,\lambda,N)}{\partial \theta}|_{\theta=-1}=4(N-2)\lambda^3-2(N-2)\lambda^2+4\lambda-2$. 
    
    If $\frac{\partial\rho(\theta,\lambda,N)}{\partial \theta}|_{\theta=-1} < 0$, $\rho(\theta,\lambda,N) \geq \min \{\rho(-1,\lambda,N)+\frac{1}{2}\frac{\partial\rho(\theta,\lambda,N)}{\partial \theta}|_{\theta=-1}, \rho(0,\lambda,N)-\frac{1}{2}\frac{\partial\rho(\theta,\lambda,N)}{\partial \theta}|_{\theta=0}\}$. Note $\rho(-1,\lambda,N)+\frac{1}{2}\frac{\partial\rho(\theta,\lambda,N)}{\partial \theta}|_{\theta=-1} = -2(N-2)\lambda^3-4\lambda+2+N\lambda-1+2(N-2)\lambda^3-(N-2)\lambda^2+2\lambda-1=(N-2)\lambda(1-\lambda)\geq 0$ and $\rho(0,\lambda,N)-\frac{1}{2}\frac{\partial\rho(\theta,\lambda,N)}{\partial \theta}|_{\theta=0} = N\lambda-1-(N-2)\lambda^2-2\lambda+1 = (N-2)\lambda(1-\lambda)\geq 0$. Thus, $\rho(\theta,\lambda,N)\geq 0$ in this case.
    
    If $\frac{\partial\rho(\theta,\lambda,N)}{\partial \theta}|_{\theta=-1}=4(N-2)\lambda^3-2(N-2)\lambda^2+4\lambda-2=2(2\lambda-1)((N-2)\lambda^2+1)\geq 0$, the minimum $\rho(\theta,\lambda,N)$ is achieved at smallest $\theta$ and $\lambda\geq \frac{1}{2}$. Since $\lambda\geq \frac{1}{2}$, $\theta\in [\frac{\lambda-1}{\lambda},0]$, and hence $\rho(\theta,\lambda,N)\geq \rho(\frac{\lambda-1}{\lambda},\lambda,N)=-2(\lambda-1)(N-2)\lambda^2+6(N-2)\lambda(\lambda-1)-2(N-4)(\lambda-1)-\frac{2(\lambda-1)}{\lambda}+N\lambda-1$, and $-\frac{2(\lambda-1)}{\lambda}\geq -2$. Thus, $\rho(\theta,\lambda,N)\geq-2\lambda^3(N-2)+8(N-2)\lambda^2+(20-7N)\lambda+2N-11$. Taking derivative, we have $-6(N-2)\lambda^2+16(N-2)\lambda+20-7N$, which is positive when $\lambda=1$. The derivative of $-6(N-2)\lambda^2+16(N-2)\lambda+20-7N$ is $(N-2)(17-12\lambda)\geq 0$. Combined with $\lambda\in[\frac{1}{2},1]$, the value of $2\lambda^3(N-2)+8(N-2)\lambda^2+(20-7N)\lambda+2N-11$ is larger than the value of choosing $\lambda=\frac{1}{2}$ (when $N < 30)$ or the value of $\lambda=\frac{1}{2}$ plus the derivative at $\lambda=\frac{1}{2}$ times $\frac{1}{2}$ (when $N\geq 30$). The value at $\lambda=\frac{1}{2}$ is $\frac{9}{4}N-\frac{9}{2}\geq 0$. When $N\geq 30$, the value of $\lambda=\frac{1}{2}$ plus the derivative at $\lambda=\frac{1}{2}$ times $\frac{1}{2}$ is $2N+3>0$. Thus, $\rho(\theta,\lambda,N)\geq 0$ in this case.
    
    As $\rho(\theta,\lambda,N) \geq 0$ in above all cases, and hence $\sum_{k=2}^{\infty}\xi(k,\lambda,N) > 0$.

    Next, we consider $(N-2)\lambda+2\lambda-1 < 0$. Solving the inequation, we have $\lambda\in[\frac{1}{N},\frac{\sqrt{N-1}-1}{N-2})$. Note when $|(N-2)\lambda^2(1-\lambda)\theta^k|\geq|(N-2)\lambda^2+2\lambda-1|$, $\theta^k((N-2)\lambda^2(1-\lambda)\theta^k+(N-2)\lambda^2+2\lambda-1)\geq 0$ for both odd and even $k$, and hence the summation is nonnegative for these $k$. So, we only need to analyze the case $|(N-2)\lambda^2(1-\lambda)\theta^k|\geq|(N-2)\lambda^2+2\lambda-1|$, in which $\omega(\lambda,\theta,k,N)\geq 0$ when $k$ is odd and $\omega(\lambda,\theta,k,N)\leq 0$ when $k$ is even. Therefore, we consider $\omega(\lambda,\theta,2q-1,N)+\omega(\lambda,\theta,2q,N)$. Note $\theta^{2q}\leq -\frac{(N-2)\lambda^2+2\lambda-1}{(N-2)\lambda^2(1-\lambda)}$ in this case.

    Recall the function $\xi(k,\lambda,\theta,N)$, we have $\omega(\lambda,\theta,2q-1,N)+\omega(\lambda,\theta,2q,N)\geq(1-\lambda)^{N-2}\xi(2q-1,\lambda,\theta,N)+\xi(2q,\lambda,\theta,N) = (1-\lambda)^{N-2}\theta^{2q-1}[(N-2)\lambda^2+2\lambda-1][(1-\lambda+\lambda\theta^{2q-1})^{N-2}+\theta (1-\lambda+\lambda\theta^{2q})^{N-2}]$. Note $\theta^{2q-1}\leq 0$ and $(N-2)\lambda^2+2\lambda-1\leq 0$. The remaining is to compare $(1-\lambda+\lambda\theta^{2q-1})^{N-2}$ and $-\theta (1-\lambda+\lambda\theta^{2q})^{N-2}$. Divide these two terms, we have $\frac{(1-\lambda+\lambda\theta^{2q-1})^{N-2}}{-\theta (1-\lambda+\lambda\theta^{2q})^{N-2}}\geq\frac{1}{-\theta}\big(\frac{1-\lambda+\lambda\theta^1}{1-\lambda+\lambda\theta^2}\big)^{N-2}$.
    
    When $N\geq 13$, we consider the bound $\theta\in[-\frac{\lambda}{1-\lambda},0]$ and $\lambda\in[\frac{1}{N},\frac{\sqrt{N-1}-1}{N-2}]$, and hence $\frac{(1-\lambda+\lambda\theta^{2q-1})^{N-2}}{-\theta (1-\lambda+\lambda\theta^{2q})^{N-2}}\geq\frac{1-\lambda}{\lambda}(\frac{1-\lambda-\lambda^2/(1-\lambda)}{1-\lambda+\lambda^3/(1-\lambda)^2})^{N-2}=\frac{1-\lambda}{\lambda}(\frac{1-3\lambda+2\lambda^2}{1-3\lambda+3\lambda^2})^{N-2}=\frac{1-\lambda}{\lambda}(1-\frac{1}{1/\lambda^2-3/\lambda+3})^{N-2}$.  By doing the partial derivative with respect to $\lambda$, we have $\frac{\partial (1-\lambda)/\lambda}{\partial \lambda} = \frac{-1}{\lambda^2} < 0$ and $\frac{\partial (1/\lambda^2-3/\lambda)}{\partial \lambda}=1/\lambda^3(3\lambda-2)<0$ when $\lambda\in[\frac{1}{N},\frac{\sqrt{N-1}-1}{N-2}]$. Thus, $ \frac{1-\lambda}{\lambda}(1-\frac{1}{1/\lambda^2-3/\lambda+3})^{N-2}$ is a decreasing function of $\lambda$. Taking $\lambda=\frac{\sqrt{N-1}-1}{N-2}$ into this function, we obtain the lower bound, which is a function of only one variable $N$. By doing the partial derivative, it can be verified that it is an increasing function of $N$. Taking $N=13$, we can verify $\frac{1-\lambda}{\lambda}(1-\frac{1}{1/\lambda^2-3/\lambda+3})^{N-2}\geq 1$. Hence $\frac{(1-\lambda+\lambda\theta^{2q-1})^{N-2}}{-\theta (1-\lambda+\lambda\theta^{2q})^{N-2}}\geq 1$ for $N\geq 13$.

    When $N<13$, we should consider both bound for $\theta$, $\theta\in[-\frac{\lambda}{1-\lambda},0]$ and $\theta^{2q}\leq -\frac{(N-2)\lambda^2+2\lambda-1}{(N-2)\lambda^2(1-\lambda)}$, which comes from the inequation condition $\theta^k((N-2)\lambda^2(1-\lambda)\theta^{2i}+(N-2)\lambda^2+2\lambda-1)\leq 0$. For $\lambda$, we separate its domain to be $[\frac{1}{N}, \Gamma]$ and $[\Gamma, \frac{\sqrt{N-1}-1}{N-2}]$, where $\Gamma=0.03*\frac{1}{N}+0.97*\frac{\sqrt{N-1}-1}{N-2}$. It can be verified that, when $\lambda\in[\frac{1}{N}, \Gamma]$, $\frac{1-\lambda}{\lambda}(1-\frac{1}{1/\lambda^2-3/\lambda+3})^{N-2}\geq 1$ for every $N$ that $N<13$, using the same strategy as that for $N\geq 13$ and the only change is the upper bound for $\lambda$. When $\lambda\in[\Gamma, \frac{\sqrt{N-1}-1}{N-2}]$, we analyze the sign of $(1-\lambda+\lambda\theta^{2q-1})^{N-2}+\theta (1-\lambda+\lambda\theta^{2q})^{N-2}$. Since $\theta^{2q}\leq -\frac{(N-2)\lambda^2+2\lambda-1}{(N-2)\lambda^2(1-\lambda)}$, by doing the partial derivative, the upper bound of $\theta^{2q}$ is a decreasing function of $\lambda$. Hence, the maximum $\theta^{2q}$ is achieved when $\lambda=\Gamma$ and then $\theta^{2q}\leq -\gamma$, where $\gamma=\frac{(N-2)\Gamma^2+2th-1}{(N-2)\Gamma^2(1-\Gamma)}$ is a variable only related to $N$. Thus, $(1-\lambda+\lambda\theta^{2q-1})^{N-2}+\theta (1-\lambda+\lambda\theta^{2q})^{N-2}\geq(1-\lambda-\lambda\sqrt{\gamma})^{N-2}-\sqrt{\gamma}(1-\lambda+\lambda\gamma)^{N-2}$, which, verified by doing partial derivative, is a decreasing function of $\lambda$. Then, taking $\lambda=\frac{\sqrt{N-1}-1}{N-2}$, we can verify that $(1-\lambda-\lambda\sqrt{\gamma})^{N-2}-\sqrt{\gamma}(1-\lambda+\lambda\gamma)^{N-2}\geq 0$ for every $N$ that $N<13$. Therefore, $(1-\lambda+\lambda\theta^{2q-1})^{N-2}+\theta (1-\lambda+\lambda\theta^{2q})^{N-2}\geq 0$ when $N<13$

    As $(1-\lambda+\lambda\theta^{2q-1})^{N-2}+\theta (1-\lambda+\lambda\theta^{2q})^{N-2}\geq 0$ for both $N\geq 13$ and $N<13$, $\omega(\lambda,\theta,2q-1,N)+\omega(\lambda,\theta,2q,N) \geq 0$, and hence $\sum_{k=1}^{\infty}\xi(k,\lambda,\theta,N) \geq 0$ when $(N-2)\lambda+2\lambda-1 < 0$.
    
    Since $\sum_{k=1}^{\infty}\xi(k,\lambda,\theta,N) \geq 0$ when $(N-2)\lambda+2\lambda-1 < 0$ and $\sum_{k=1}^{\infty}\xi(k,\lambda,\theta,N)+N\lambda-1 \geq 0$ when $(N-2)\lambda+2\lambda-1 \geq 0$, we have, when $\lambda \geq \frac{1}{N}$,
    \begin{align*}
        \frac{\partial (v_a^2/m_a^2)}{\partial \lambda} \geq 0,
    \end{align*}



    Now, we discuss passive users. Based on Theorem~\ref{thm:mean_vairance}, we have
    \begin{align*}
         & \frac{\partial (v_p^2/m_p^2)}{\partial \lambda} = 2\sum_{k=1}^{\infty} \Big[ \frac{CN(\theta^k-1)(1-\lambda+\lambda\theta^k)^{CN-1}m_p}{m_p^2} \\
         & + \frac{CN(1-\lambda)^{CN-1}(1-\lambda+\lambda\theta^k)^{CN}}{m_p^2} \Big] + \frac{CN(1-\lambda)^{CN}}{m_p^2} \\
         & = \frac{2CN(1-\lambda)^{CN-1}[1-\lambda + \sum_{k=1}^{\infty}(1-\lambda+\lambda\theta^k)^{CN-1}\theta^k]}{m_p^2}.
    \end{align*}
    Since $\theta \in (-1,0)$ and $1-\lambda+\lambda\theta^k \geq 0$ for all $k$,
    \begin{align*}
         & \frac{\partial (v_p^2/m_p^2)}{\partial \lambda} \\
         & \geq \frac{2CN(1-\lambda)^{CN-1}[1-\lambda + \sum_{k=1}^{\infty}(1-\lambda)^{CN-1}\theta^k]}{m_p^2} \\
         & = \frac{CN(1-\lambda)^{CN-1}[1-\lambda + (1-\lambda)^{CN-1} \frac{\theta}{1-\theta}]}{m_p^2}.
    \end{align*}
    When $\lambda\in [0,\frac{1}{2}]$, $\theta\in [-\frac{\lambda}{1-\lambda},0]$, and hence
    \begin{align*}
        & \frac{\partial (v_p^2/m_p^2)}{\partial \lambda} \geq \frac{2CN(1-\lambda)^{CN-1}[1-\lambda - (1-\lambda)^{CN-1} \lambda]}{m_p^2} \\
        & = \frac{2CN(1-\lambda)^{CN-1}[1-(1+(1-\lambda)^{CN-1})\lambda]}{m_p^2} \\
        & \geq \frac{2CN(1-\lambda)^{CN-1}[1-(1+1)\frac{1}{2}]}{m_p^2} = 0.
    \end{align*}
    When $\lambda\in [\frac{1}{2},1]$, $\theta\in [\frac{\lambda-1}{\lambda}, 0]$, and then
    \begin{align*}
        & \frac{\partial (v_p^2/m_p^2)}{\partial \lambda} \geq \frac{2CN(1-\lambda)^{CN-1}[1-\lambda - (1-\lambda)^{CN}]}{m_p^2} \\
        & \geq \frac{2CN(1-\lambda)^{CN-1}[1-\lambda - (1-\lambda)]}{m_p^2} = 0.
    \end{align*}

    Therefore, when $\lambda\geq \frac{1}{N}$,
    \begin{align*}
        \frac{\partial (v_p^2/m_p^2)}{\partial \lambda} \geq 0.
    \end{align*}
    
\subsection{Proof of Lemma \ref{lemma:opt_theta}}
\label{app:active}
    The partial derivative of $v_a^2$ with respect to $\theta$ when $\lambda$ is determined is
    \begin{align*}
        & \frac{\partial v_a^2}{\partial \theta} = 2m_a\sum_{k=1}^{\infty}[(1-\lambda)k\theta^{k-1}(1-\lambda+\lambda \theta^{k})^{N-1}  \\
        & \quad +(\lambda+(1-\lambda)\theta^k)(N-1)\lambda k \theta^{k-1}(1-\lambda+\lambda \theta^k)^{N-2})]  \\
        = & 2m_a \sum_{k=1}^{\infty}[1-2\lambda + N\lambda^2+ N\lambda(1-\lambda)\theta^{k}] \\
        & \quad \times k\theta^{k-1}(1-\lambda+\lambda \theta^{k})^{N-2}.
    \end{align*}
    Let $\psi(k):= 2m_a [1-2\lambda + N\lambda^2+ N\lambda(1-\lambda)\theta^{k}]k\theta^{k-1}(1-\lambda+\lambda \theta^{k})^{N-2}$ be the $k$th term in the above equation. 

    We first show that $\alpha,\beta < \frac{1}{2}$. Choosing $y=0$ and $y=\frac{1}{2}$, we have $h_N(0)=\Bar{h}_N(0)=1$, $h_N(\frac{1}{2})=-\frac{3}{8}N+\frac{1}{12}<0$ and $\Bar{h}_N(\frac{1}{2})=-\frac{1}{8}N < 0$. Therefore, there must exists one root that in $(0,\frac{1}{2})$ for both $h_N(y)$ and $\Bar{h}_N(y)$. Thus, $\lambda\in[0,\frac{1}{2})$ and hence $\theta\in[-\frac{\lambda}{1-\lambda},1]$.
    
    Therefore, $[1-2\lambda + N\lambda^2+ N\lambda(1-\lambda)\theta^{k}] \geq 1-2\times\frac{1}{2} + N\lambda^2 - N \lambda (1-\lambda) \frac{\lambda}{1-\lambda}=0$, and $1-\lambda+\lambda \theta^{k}\geq 1-\lambda+\lambda \theta^{1}\geq 1-\lambda-\lambda\frac{\lambda}{1-\lambda} > 1-2\lambda \geq 0$. So, $\psi(k)<0$ only when $\theta<0$ and $k$ is even.

The expression of $\frac{\psi(2i-1)}{\psi(2i)}$ is
\begin{align*}
    -\frac{\psi(2i-1)}{\psi(2i)} =& -\frac{1-2\lambda + N\lambda^2+ N\lambda(1-\lambda)\theta^{2i-1}}{1-2\lambda + N\lambda^2+ N\lambda(1-\lambda)\theta^{2i}}\frac{2i-1}{2i} \\
    &\times \frac{\theta^{2i-1}}{\theta^{2i}}\Big(\frac{1-\lambda+\lambda \theta^{2i-1}}{1-\lambda+\lambda \theta^{2i}}\Big)^{N-2}.
\end{align*}

When $\theta \geq 0$, it is easy to verify that $-\frac{\psi(2i-1)}{\psi(2i)}\geq 1$ since $\theta\leq 1$.

When $\frac{-\lambda}{1-\lambda}\leq \theta<0$, we have
\begin{align*}
    &\frac{\psi(2i-1)}{\psi(2i)} \geq \frac{1-2\lambda + N\lambda^2+ N\lambda(1-\lambda)\theta^{1}}{1-2\lambda + N\lambda^2+ N\lambda(1-\lambda)\theta^{2}}\frac{1}{2}\frac{\theta^{1}}{\theta^{2}} \\
    &\quad \times \Big(\frac{1-\lambda+\lambda \theta^{1}}{1-\lambda+\lambda \theta^{2}}\Big)^{N-2} \\
    & \geq \frac{1-2\lambda + N\lambda^2 - N\lambda(1-\lambda) \frac{\lambda}{1-\lambda}}{1-2\lambda + N\lambda^2+ N\lambda(1-\lambda)\frac{\lambda^2}{(1-\lambda)^2}} \frac{1}{2} \frac{1-\lambda}{\lambda} \\
    & \quad\times \Big(\frac{1-\lambda-\lambda \frac{\lambda}{1-\lambda}}{1-\lambda+\lambda \frac{\lambda^2}{(1-\lambda)^2}}\Big)^{N-2}\\
    & = \frac{1-\lambda}{2\lambda} \frac{1-3\lambda + 2\lambda^2}{1-3\lambda+(N+2)\lambda^2} (1-\frac{\lambda^2}{1-3\lambda+3\lambda^2})^{N-2}.
\end{align*}

Define $A_N(y), \Bar{B}(y)$ and $C_N(y)$ as follows:
\begin{align*}
    A_N(y) &= \frac{1-3y + 2y^2}{1-3y+(N+2)y^2} (1-\frac{y^2}{1-3y+3y^2})^{N-2}, \\
    \Bar{B}(y) &= \frac{2y}{1-y}, \quad  C_N(y) = \frac{1-3y-(N-4)y^2}{1-3y+(N+2)y^2}.
\end{align*}
Then, $\frac{\psi(2i-1)}{\psi(2i)} \geq \frac{A_N(y)}{\Bar{B}(y)}$.

First, we prove $C_N(y) \geq \Bar{B}(y)$. Subtracting $\Bar{B}(y)$ from $C_N(y)$, we have
\begin{align*}
    &C_N(y) - \Bar{B}(y)  = \frac{1-3y-(N-4)y^2}{1-3y+(N+2)y^2} - \frac{2y}{1-y} \\
    & = \frac{(1-3y)^2-(1-y)(N-4)y^2-2y(N+2)y^2}{(1-3y+(N+2)y^2)(1-y)} \\
    & = \frac{h_N(y)}{(1-3y+(N+2)y^2)(1-y)}.
\end{align*}

As $\alpha$ is the smallest positive root of $h_N(y)$ and $h_N(0)=1>0$, $h_N(y) \geq 0$ when $y\in[0,\alpha]$. Note $\alpha < \frac{1}{2}$.

Now we check whether $(1-3y+(N+2)y^2)$ is positive. The derivative of $(1-3y+(N+2)y^2)$ is $2(N+2)y-3$ and hence it achieves minimum when $y=\frac{3}{2(N+2)}$. Choosing $y=\frac{3}{2(N+2)}$, we have $(1-3y+(N+2)y^2)=1-\frac{9}{2(N+2)}+\frac{9}{4(N+2)}= 1-\frac{9}{4(N+2)}\geq 1-\frac{9}{4(2+2)}>0$.

Since $(1-3y+(N+2)y^2)$, $(1-y)$ and $\Bar{h}_N(y)$ are all non-negative when $y\in[0,\alpha]$, we have 
\begin{align*}
    C_N(y) - \Bar{B}(y) = \frac{h_N(y)}{(1-3y+(N+2)y^2)(1-y)} \geq 0.
\end{align*}

Next, we prove $A_N(y)\geq C_N(y)$.

We know that $(1-y)^N \approx 1-Ny$ from binomial approximation. Let $j(y)=(1-y)^N-(1-Ny)$. The derivative of $j(y)$ is $N [ 1 - (1-y)^{N-1}]$. As $(1-y)^{N-1} \leq 1$ when $0\leq y \leq 1$, we have $j(y)$ is increasing when $y\in[0,1]$. Hence, $j(y) \geq j(0) = 0$, which means $(1-y)^N \geq 1-Ny$, when $y\in[0,1]$.

Now, we want to find whether $\frac{y^2}{3y^2-3y + 1}$ is between $0$ and $1$. The derivative of $3y^2-3y + 1 - y^2$ is $4y-3$. Thus, when $0\leq y\leq \frac{1}{2}$, $3y^2-3y + 1 - y^2 \geq 3\times\frac{1}{4}-3\times\frac{1}{2}+1-\frac{1}{4}=0$. Therefore, $3y^2-3y + 1 \geq y^2$. As $3y^2-3y+1>0$ and $y^2 \geq 0$, we have $\frac{y^2}{3y^2-3y + 1} \leq 1$ and $\frac{y^2}{3y^2-3y + 1} \geq 0$ when $y\in[0,\frac{1}{2}]$.

Thus, when $y\in[0,\frac{1}{2}]$,
\begin{align*}
   (1 - \frac{y^2}{3y^2 - 3y + 1})^{N-2} \geq 1 - (N-2)\frac{y^2}{3y^2-3y + 1}.
\end{align*}

Then,
\begin{align*}
    A_N(y) & = \frac{1-3y + 2y^2}{1-3y+(N+2)y^2} (1-\frac{y^2}{1-3y+3y^2})^{N-2} \\
    & \geq \frac{1-3y + 2y^2}{1-3y+(N+2)y^2} [1-(N-2)\frac{y^2}{1-3y+3y^2}] \\
    & = \frac{1-3y + 2y^2}{1-3y+(N+2)y^2}\frac{1-3y-(N-5)y^2}{1-3y+3y^2}.
\end{align*}
When $N\geq 2$, $1-3y-(N-5)y^2 \leq 1-3y+3y^2$, and hence
\begin{align*}
    \frac{1-3y-(N-5)y^2}{1-3y+3y^2} \geq \frac{1-3y-(N-4)y^2}{1-3y+2y^2}.
\end{align*}
Therefore,
\begin{align*}
    A_N(y) &\geq \frac{1-3y + 2y^2}{1-3y+(N+2)y^2}\frac{1-3y-(N-4)y^2}{1-3y+2y^2} \\
    &= \frac{1-3y-(N-4)y^2}{1-3y+(N+2)y^2} = C_N(y).
\end{align*}

So, when $y\in[0,\alpha]$, $A_N(y) \geq C_N(y) \geq \Bar{B}(y)$. Thus,
\begin{align*}
    \frac{\psi(2i-1)}{\psi(2i)} \geq \frac{A_N(\lambda)}{\Bar{B}(\lambda)} \geq 1.
\end{align*}

Hence, $\frac{\partial v_a^2}{\partial \theta} \geq 0$ and $v_a^2$ achieves its minimum value when $\theta$ is minimized. Thus, choosing $r=\frac{\lambda}{1-\lambda}$ and $s=1$ minimizes $v_a^2$. 

Since $E[AoI_a^z]$ is an increasing function of $v_a^2$, choosing $r=\frac{\lambda}{1-\lambda}$ and $s=1$ also minimizes $E[AoI_a^z]$.

Next, we consider passive users.

Recall the expression of $v_p^2$,
\begin{align*}
    v_p^2 &= 2\sum_{k=1}^{\infty}((1-\lambda +\lambda \theta^{k})^{CN} -m_p) m_p +  m_p - m_p^2.
\end{align*}
For this expression, it is easy to verify that $v_p^2|_{\theta>0} \geq v_p^2|_{\theta=0}$. Therefore, $v_p^2$ reaches its minimum in the region $-\frac{\lambda}{1-\lambda} \leq \theta \leq 0$.

The partial derivative of $v_p^2$ with respect to $\theta$ is
\begin{align*}
    \frac{\partial v_p^2}{\partial \theta} = 2m_p \sum_{k=1}^{\infty}CN k \lambda \theta^{k-1} (1 - \lambda + \lambda \theta^k)^{CN-1}.
\end{align*}
Let $\phi(k)=2 m_p CN k \lambda \theta^{k-1} (1 - \lambda + \lambda \theta^k)^{CN-1}$ be the $k-th$ term of the partial derivative above. When $\theta \in [-\frac{\lambda}{1-\lambda},0]$, we can find that $\phi(k) \geq 0$ when $k$ is odd and $\phi(k) < 0$ when $k$ is even. Thus, $\frac{\partial v_p^2}{\partial \theta}$ is positive if $-\frac{\phi(2i-1)}{\phi(2i)} > 1$ for all positive integer $i$.

Plugging in the formula of $\phi(k)$,
\begin{align*}
    -\frac{\phi(2i-1)}{\phi(2i)} = - \frac{2i-1}{2i}\frac{1}{\theta} \Big( \frac{1 - \lambda + \lambda \theta^{2i-1}}{1 - \lambda + \lambda \theta^{2i}} \Big)^{CN-1}.
\end{align*}

Since $\theta^{2i-1}\geq \theta \geq -\frac{\lambda}{1-\lambda}$ and $\theta^{2i}\leq \theta^{2} \leq \frac{\lambda^2}{(1-\lambda)^2}$ when $\theta \in [-\frac{\lambda}{1-\lambda},0]$, we have
\begin{align*}
    -\frac{\phi(2i-1)}{\phi(2i)} &\geq -\frac{\phi(1)}{\phi(2)} = -\frac{1}{2} \frac{1}{\theta} \Big( \frac{1-\lambda+\lambda \theta}{1-\lambda+\lambda \theta^{2}} \Big)^{CN-1} \\
    & \geq \frac{1-\lambda}{2\lambda} \Big( 1 - \frac{\lambda^2}{3\lambda^2-3\lambda+1} \Big)^{CN-1}.
\end{align*}

Define $\Bar{A}_{CN}(y)$ as follows.
\begin{align*}
    \Bar{A}_{CN}(y) &= \Big( 1 - \frac{y^2}{3y^2-3y+1} \Big)^{CN-1}.
\end{align*}

When $y\in[0, \frac{1}{2}]$,
\begin{align*}
    &\Bar{A}_{CN}(y) - \Bar{B}(y) = \Big( 1 - \frac{y^2}{3y^2-3y+1} \Big)^{CN-1} - \frac{2y}{1-y} \\
    & \geq 1- (CN-1)\frac{y^2}{3y^2-3y+1} - \frac{2y}{1-y} \\
    & = \frac{\Bar{h}_{CN}(y)}{(3y^2-3y+1)(1-y)}.
\end{align*}

Note $\Bar{h}_{CN}(0) = 1$, $\Bar{h}_{CN}(\frac{1}{2}) = -\frac{CN}{8}$, and $\beta$ is the smallest positive root of $\Bar{h}_{CN}(y)$. Thus, $\Bar{h}_{CN}(y)\geq 0$ if $y\in[0,\beta]$. When $y\in[0,\beta]$, $3y^2-3y+1 > 0$ and $1-y>0$, and hence
\begin{align*}
    \Bar{A}_{CN}(y) - \Bar{B}(y) = \frac{\Bar{h}_{CN}(y)}{(3y^2-3y+1)(1-y)} \geq 0.
\end{align*}

Therefore, if $\lambda \in [0,\beta]$,
\begin{align*}
    -\frac{\phi(2i-1)}{\phi(2i)} \geq \frac{\Bar{A}_{CN}(\lambda)}{\Bar{B}(\lambda)} \geq 1.
\end{align*}

So, $\frac{\partial v_p^2}{\partial \theta} > 0$ and $v_p^2$ achieves its minimum value when $\theta$ is minimized and $\lambda \in [0,\beta]$. Thus, choosing $r=\frac{\lambda}{1-\lambda}$ and $s=1$ minimizes $v_p^2$. 


\subsection{Proof of Lemma \ref{lemma_alpha}}
\label{app:lemma_a}

Let us rewrite the expression of $h_N(y)$.
\begin{align*}
    h_N(y) = -(N+8)y^3 - (N-13) y^2 - 6y + 1.
\end{align*}

The derivative of $h_N(y)$ is
\begin{align*}
    h'_N(y) = -3(N+8)y^2-2(N-13)y-6.
\end{align*}
Let $h'_N(y) = 0$, we have $y_1=\frac{2(N-13) + \sqrt{4(N-13)^2-72(N+8)}}{-6(N+8)}$ and $y_2=\frac{2(N-13) - \sqrt{4(N-13)^2-72(N+8)}}{-6(N+8)}$. Thus, the roots of $h'_N(y)$ are either not existing or all negative. Since there is no root when $y \geq 0$ and $h'_N(0)=-6<0$, $h'_N(y)$ is always negative when $y\geq 0$. Therefore, $h_N(y)$ is always decreasing when $y\geq 0$. Considering $h_N(0)=1>0$ and $h_N(1)=-2N<0$, $\alpha$ must exist between $0$ and $1$, and $h_N(y)>0$ if and only if $y<\alpha$, when $y \geq 0$.

Let $y=\frac{1}{N}$, we have
\begin{align*}
    h_N(\frac{1}{N}) & = \frac{-(N+8)}{N^3} - \frac{(N-13)}{N^2} - \frac{6}{N} + 1\\
    & = \frac{N^3-7N^2+12N-8}{N^3}.
\end{align*}
Solving $N^3-7N^2+12N-8=0$, we obtain the only real root, which is approximately $4.8751$. Since $N^3-7N^2+12N-8\rightarrow \infty$ when $N\rightarrow \infty$, we have $N^3-7N^2+12N-8>0$ when $N\geq 5$.

Therefore, $h_N(\frac{1}{N}) > 0$, and hence $\alpha > \frac{1}{N}$ when $N>4$.

Now we proof $\beta>\frac{1}{N}$ when $N>C+4$. Let us rewrite $\Bar{h}_N(y)$ here,
\begin{align*}
    \Bar{h}_{CN}(y) = (CN-10)y^3-(CN-13) y^2 -6y +1.
\end{align*}

The analysis of $\Bar{h}_{CN}(y)$ depends on the sign of $CN-10$.

When $CN-10=0$, $\Bar{h}_{CN}(y)=3y^2-6y+1$. There are two roots for $\Bar{h}_{CN}(y)$, which are $\frac{3-\sqrt{6}}{3}\approx 0.1835$ and $\frac{3+\sqrt{6}}{3} \approx 1.8165$. Note $N$ must be $10$ when $CN=10$ and $N > C+4$. Thus, $\beta \approx 0.1835 > \frac{1}{N} = 0.1$.

When $CN-10 < 0$, the derivative of $\Bar{h}_{CN}(y)$ is
\begin{align*}
    \Bar{h}'_{CN}(y) = 3(CN-10)y^2 - 2(CN-13)y -6.
\end{align*}
Let $\Bar{h}'_{CN}(y) = 0$, we obtain two roots, one is $x_1 = \frac{2(CN-13) + \sqrt{4(CN-13)^2 + 72(CN-10)}}{6(CN-10)}$, and the other one is $x_2 = \frac{2(CN-13) - \sqrt{4(CN-13)^2 + 72(CN-10)}}{6(CN-10)}$. Look at the term in the square root, $4(CN-13)^2 + 72(CN-10)= 4C^2N^2 - 32CN -44$. It has one negative $4-3\sqrt{3} \approx -1.1962$ and one positive root $4 + 3 \sqrt{3}\approx 9.1692$. Thus, $\Bar{h}'_{CN}(y)$ does not have real roots when $1 < CN < 10$, and hence $\Bar{h}_{CN}(y)$ is always a decreasing function in this case. Therefore, $\Bar{h}_{CN}(y)$ only has one root, which is $\beta$. In this case, $\Bar{h}_{CN}(y)>0$ if and only if $y<\beta$.

When $CN-10 > 0$, $\Bar{h}_{CN}(-\infty) < 0$, $\Bar{h}_{CN}(0)=1>0$, $\Bar{h}_{CN}(1)=-2<0$, and $\Bar{h}_{CN}(\infty) > 0$. Thus, there must exist three roots, one is less than $0$, one is between $0$ and $1$, and one is larger than $1$. Then, there is just one root, $\beta$, between $0$ and $1$, and $\Bar{h}_{CN}(y) > 0$ if and only if $y<\beta$ when $y\in(0,1)$.

Therefore, if $\Bar{h}_{CN}(y_1)>0$ for a certain $y_1\in(0,1)$, we have $\beta>y_1$ when $1\leq CN < 10$ or $CN > 10$. Taking $y=\frac{1}{N}$ in $\Bar{h}_{CN}(y)$, we have 
\begin{align*}
    \Bar{h}_{CN}(\frac{1}{N}) &= \frac{CN-10}{N^3} - \frac{CN-13}{N^2} - \frac{6}{N} + 1 \\
    & = \frac{CN-10 - CN^2 + 13 N - 6N^2 +N^3 }{N^3} \\
    & = \frac{N^3-(6+C)N^2+(13+C)N-10}{N^3}.
\end{align*}
When $N > C+4$,
\begin{align*}
    \Bar{h}_{CN}(\frac{1}{N}) \geq& \frac{(C+5)^3-(6+C)(C+5)^2+(13+C)(C+5)}{N^3} \\
    &  - \frac{10}{N^3} =  \frac{8C+30}{N^3} > 0.
\end{align*}

Therefore, $\beta > \frac{1}{N}$ when $N>C + 4$.

\end{document}